\crefname{algorithm}{Algorithm}{Algorithms}
\newtheorem{theorem}{Theorem}
\newtheorem{lemma}[theorem]{Lemma}
\newtheorem{observation}[theorem]{Observation}
\newtheorem{corollary}[theorem]{Corollary}
\newtheorem{proposition}[theorem]{Proposition}
\newtheorem{definition}[theorem]{Definition}
\newcommand{\ocal}{\ensuremath{\mathcal{O}}}
\newcommand{\dcup}{\ensuremath{\mathbin{\dot\cup}}}
\newcommand{\OPT}{\operatorname{OPT}}
\newcommand{\eps}{\varepsilon}
\DeclarePairedDelimiter\abs{\lvert}{\rvert}
\acrodef{TSP}[\textnormal{\textsc{tsp}}]{\textsc{Traveling Salesperson Problem}}
\acrodef{MDMTSP}[\textnormal{\textsc{mdmtsp}}]{\textnormal{\textsc{Multi-Depot Multiple TSP}}}
\acrodef{CSF}[\textnormal{\textsc{csf}}]{constrained spanning forest}
\acrodef{FPT}[\textsc{fpt}]{fixed-parameter tractable}
\acrodef{CPP}[\textsc{cpp}]{\textsc{Chinese Postperson Problem}}
\acrodef{RPP}[\textnormal{\textsc{rpp}}]{\textsc{Rural Postperson Problem}}
\acrodef{DRPP}[\textnormal{\textsc{drpp}}]{\textnormal{\textsc{Depot Rural Postperson Problem}}}
\acrodef{LP}[\textsc{lp}]{linear program}
\begin{document}
\title{A \texorpdfstring{$\bm{(3/2 + \varepsilon)}$}{(3/2+ε)}-Approximation for Multiple TSP with a Variable Number of Depots}


\author{
	Max Deppert \thanks{{Hamburg University of Technology, Institute for Algorithms and Complexity}, {Germany}. \textbf{email:} \texttt{max.deppert@tuhh.de}} \and
	Matthias Kaul \thanks{{Hamburg University of Technology, Institute for Algorithms and Complexity}, {Germany}. \textbf{email:} \texttt{matthias.kaul@tuhh.de}} \and
	Matthias Mnich \thanks{{Hamburg University of Technology, Institute for Algorithms and Complexity}, {Germany}. \textbf{email:} \texttt{matthias.mnich@tuhh.de}}
	}

\begin{titlepage}

\maketitle

\begin{abstract}
  One of the most studied extensions of the famous Traveling Salesperson Problem (TSP) is the {\sc Multiple TSP}: a set of $m\geq 1$ salespersons collectively traverses a set of $n$ cities by $m$ non-trivial tours, to minimize the total length of their tours.
  This problem can also be considered to be a variant of {\sc Uncapacitated Vehicle Routing}, where the objective is to minimize the sum of all tour lengths.
  When all $m$ tours start from and end at a \emph{single} common \emph{depot} $v_0$, then the metric {\sc Multiple TSP} can be approximated equally well as the standard metric TSP, as shown by Frieze (1983).
   
   
  The metric {\sc Multiple TSP} becomes significantly harder to approximate when there is a \emph{set}~$D$ of $d \geq 1$ depots that form the starting and end points of the $m$ tours.
  For this case, only a $(2-1/d)$-approximation in polynomial time is known, as well as a $3/2$-approximation for \emph{constant} $d$ which requires a prohibitive run time of $n^{\Theta(d)}$ (Xu and Rodrigues, \emph{INFORMS J. Comput.}, 2015).
  A recent work of Traub, Vygen and Zenklusen (STOC 2020) gives another approximation algorithm for metric {\sc Multiple TSP} with run time $n^{\Theta(d)}$, which reduces the problem to approximating metric TSP.
   
  In this paper we overcome the $n^{\Theta(d)}$ time barrier: we give the first efficient approximation algorithm for {\sc Multiple TSP} with a \emph{variable} number $d$ of depots that yields a better-than-2 approximation.
  Our algorithm runs in time $(1/\varepsilon)^{\mathcal O(d\log d)}\cdot n^{\mathcal O(1)}$, and produces a $(3/2+\varepsilon)$-approximation with constant probability.
  For the graphic case, we obtain a deterministic $3/2$-approximation in time $2^d\cdot n^{\mathcal O(1)}$.
\end{abstract}
\end{titlepage}

\section{Introduction}
\label{sec:introduction}
The \ac{TSP} is one of the best-studied problems in combinatorial optimization: given a complete graph $G$ on $n$ nodes together with edge weights $w: E(G)\rightarrow \mathbb R_{\geq 0}$, we seek a tour that starts at some node $v_0\in V(G)$, then visits all other nodes of $G$ exactly once, and returns to the origin $v_0$ in such a way that the overall tour weight is minimized, which is the sum of the weights of the edges traversed by the tour.
TSP is one of Karp's 21 $\mathsf{NP}$-complete problems~\cite{Karp1972}, which motivates the design of efficient, polynomial-time approximation algorithms for it.
Recall that an $\alpha$-approximation for a minimization problem returns, for any instance $I$, in polynomial time a solution of value at most $\alpha\cdot \OPT(I)$, where $\OPT(I)$ denotes the value of an optimal solution for $I$.
Of special importance in this regard is {\sc Metric TSP}, when the edge weight function $w$ obeys the triangle inequality.
For {\sc Metric TSP}, the tree doubling heuristic yields a 2-approximation, which was improved to a $3/2$-approximation by Christofides~\cite{Christofides1976} and Serdyukov~\cite{Serdyukov1978} in the 1970s.
This approximation factor stood unchallenged for many decades until its recent improvement to a $(3/2 - 10^{-36})$-approximation by Karlin et al.~\cite{KarlinKG2021}.

Due to its  ubiquity, a large variety of extensions of the \ac{TSP} have been studied.
Among the most prominent ones is the {\sc Multiple TSP}, where a set of $m\geq 1$ salespersons (all starting from some common node $v_0$ called a \emph{depot}) jointly traverse the entire set of $n$ nodes, in order to minimize the overall tour length.
That is, the goal is to find a collection of $m$ pairwise edge-disjoint cycles $C_1,\hdots,C_m$ (all intersecting in some node $v_0$) in~$G$ whose union covers all nodes of the graph and such that the sum of the weights of the cycles is minimized. 
This character of having to solve both a partitioning and a sequencing problem simultaneously gives rise to considerable added complexity, akin to that encountered in vehicle routing problems. 
Indeed, one could interpret this problem as a variant of the {\sc Uncapacitated Vehicle Routing Problem}; we, however, will adhere to the \ac{TSP}-style naming convention, since this is more prevalent in the literature.
Let us just mention here that for metric edge weights, {\sc Multiple TSP} has the same approximation guarantee as the standard (single-person) metric TSP; in particular, {\sc Metric Multiple TSP} admits a $(3/2 - 10^{-36})$-approximation in polynomial time by the results of Karlin et al.~\cite{KarlinKG2021}.
Frieze~\cite{Frieze1983} analysed the case of {\sc Metric Multiple TSP} when each of the tours has to contain at least one edge; he provided a $3/2$-approximation for this setting in polynomial time.
The {\sc Multiple TSP} is studied in more than 1,300 publications; an extensive survey is provided by Bekta\c{s}~\cite{Bektas2006}.

In this paper we study an extension of the {\sc Multiple TSP}, where a \emph{set} $D\subseteq V(G)$ of nodes is distinguished as \emph{depots}.
Formally, the \ac{MDMTSP} takes as input a complete graph $G$ on $n$ nodes together with metric edge weights $w: E(G)\rightarrow \mathbb R_{\geq 0}$, as well as a set $D\subseteq V(G)$ of $d=\abs{D}$ depots and an integer $m\geq 1$ denoting the number of salespersons available.
Now again we are seeking a set of~$m$ pairwise edge-disjoint cycles $C_1,\hdots,C_m$ in $G$ whose union covers all nodes of the graph and such that the sum of the weights of the cycles is minimized, but in addition each cycle must contain some depot from~$D$.
Such set of cycles is an optimal solution for the \ac{MDMTSP} instance, and we denote the value of some optimal solution by $\OPT(G,D,w)$ (or simply $\OPT$ if the instance is clear from the context).
The \ac{MDMTSP} is motivated by several applications of high practical impact, like motion planning of a set of unmanned aerial vehicles \cite{MalikSD2007,RathinamSD2007,YadlapalliMDP2009} and the routing of service technicians where the technicians are leaving from multiple depots~\cite{Parragh2010}.

The theoretical aspects of \ac{MDMTSP} have been studied in many research papers~\cite{Bektas2006,Bektas2012,BenaventM2013,BurgerSDS2018,KaraBektas2006,LaporteNT1988,SundarR2015,TraubVZ2020,XuR2015,XuR2017,XuXR2011}.
At this point, let us issue a word of caution.
There are quite a few other varieties of (\textsc{mdm})\ac{TSP} considered in the literature, all subtly different from each other.
For a compact overview of possible variations, there is the review paper of Bekta\c{s}~\cite{Bektas2006}.
For the scope of this paper, we consider the metric \ac{MDMTSP} where the edge weights form a metric.
This allows us to assume that $m=d$ throughout.
This assumption is made for the following two reasons: on the one hand, the case $m>d$ is negligible as the objective function (the total weight of all tours) is invariant for multiple tours starting from a single depot (if weights satisfy the triangle inequality, it is easy to show that there is always an optimal solution in which at most one route will start and end at each depot).
On the other hand, in the case $m \leq d$ we can try each selection of $d'=m$ depots by paying a multiplicative factor of $\binom{d}{m}$ in the run time only.
Thus, any instance of metric \ac{MDMTSP} is specified by a triple $(G,D,w)$, where $G$ is a complete graph on $n$ nodes, $D\subseteq V(G)$ is the set of depots, and $w:E(G)\rightarrow\mathbb R_{\geq 0}$ is a metric.
 
\emph{The polynomial-time approximability of metric \ac{MDMTSP} is not fully understood.}
That there is a \emph{set} $D$ of depots (and not just a single depot $v_0$), each one of which must be visited by one of the tours, makes the approximability of the problem much harder compared to metric {\sc Multiple TSP} (i.e., the version without depots).
The added complexity arises from the fact that we not only have to give a good order in which to visit nodes, as in the \ac{TSP}, but we also have to partition the nodes appropriately.
In particular, the Christofides-Serdyukov algorithm \cite{Christofides1976, Serdyukov1978} no longer yields a $3/2$-approximation in this setting.
The original analysis of Christofides' and Serdyukov's algorithms relies on all odd-degree nodes of some spanning structure $F$ lying on the same tour, so a parity-correcting edge set $J$ can be computed that weighs at most $\frac{1}{2}\OPT$.
This fact is not available in the multi-depot setting, so in polynomial time we can only guarantee a $2$-approximation by using the spanner $F$ for~$J$ also. 
However, this only achieves a tight approximation ratio of $2-\frac{1}{d}$ for the multi-depot setting, as shown by Xu et al. \cite{XuXR2011} (see \cref{fig:CounterexampleChristofides} for a version of their lower-bound example), because the matching can have weight $\frac{d-1}{d}\OPT$. 
\begin{figure}[htpb]
  \centering
  \includegraphics[page=1, scale = 0.7]{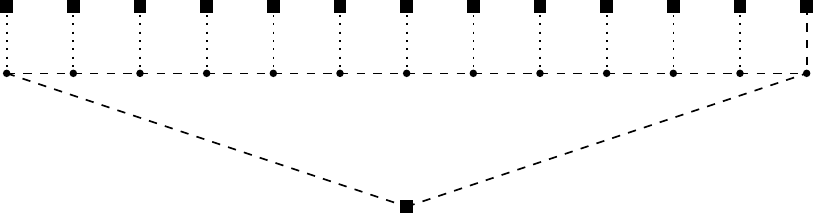}
  \caption{An instance on which \cref{alg:MDChristofides} achieves approximation ratio arbitrarily close to $2$.
    Square nodes are depots and all edges have unit weight.
    Dashed edges indicate a tour of length $d$, dotted edges a minimum \acs{CSF} with weight $d-1$ that requires a join of weight $d-1$ to complete.\label{fig:CounterexampleChristofides}}
\end{figure}
To avoid this issue, the constrained spanning forest needs to be rearranged such that there is again a matching of weight $\frac{1}{2} \OPT$, as in the work of Xu and Rodrigues \cite{XuR2015}---this rearrangement though requires $n^{\Theta(d)}$ time.

Similarly, the algorithmic approaches to metric \ac{TSP} based on solving a \ac{LP} are also unlikely to give $\alpha$-approximation algorithms with $\alpha < 2$ for metric \ac{MDMTSP}.
To this end, consider the following multi-depot version of the subtour-elimination \ac{LP}, \ref{lp:subtourElimination}:
\begin{equation*}
\label{lp:subtourElimination}\tag{MDMTSP-LP}
  \begin{array}{ll@{}ll}
    \text{minimize}  & \displaystyle\sum\limits_{e\in E(G)} w_{e}&x_{e} &\\
    \text{subject to}& \displaystyle\sum\limits_{e\in \delta(v)}   &x_{e} = 2,  &\forall \; v \in V(G)\setminus D\\
                      & \displaystyle\sum\limits_{e\in \delta(U)}   &x_{e} \geq 2,  &\forall \; U \subseteq V(G)\setminus D\\
                           &                                                &x_{e} \in [0,2], & \forall e \in E(G)
 \end{array}
\end{equation*}

The following construction in \cref{fig:CounterexampleLP} shows that \ref{lp:subtourElimination} has integrality  gap $2$:
\begin{figure}[htpb]
  \centering
  \includegraphics[page=2, scale=0.7]{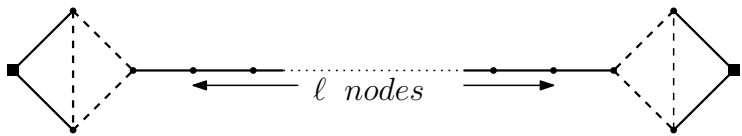}
  \caption{An instance on which the multi-depot subtour-elimination \ac{LP} has integrality gap arbitrarily close to $2$. The square nodes are the depots and all edges have unit weight. The dashed edges are assigned $x_e = 0.5$ by the \ac{LP}, the other edges $x_e = 1$. The \ac{LP} then has optimum value at most $\ell+6$, whereas the \ac{MDMTSP} has optimum value $2(\ell+4)$.\label{fig:CounterexampleLP}}
\end{figure}

If one gives up on the polynomial run time of the approximation algorithm, then smaller approximation factors are possible.
Xu and Rodrigues~\cite{XuR2015} show how to obtain a $3/2$-approximation, but their algorithm requires time $n^{\Theta(d)}$, which is polynomial only if the number $d=|D|$ of depots is constant.
Another $3/2$-approximation for \ac{MDMTSP} with run time $n^{\Theta(d)}$ follows from the recent work of Traub, Vygen and Zenklusen~\cite{TraubVZ2020}.
They in fact show the much stronger result that any $\lambda$-approximation algorithm for metric \ac{TSP} also gives a $(\lambda + \varepsilon)$-approximation algorithm for metric \ac{MDMTSP} with an additional run time factor of~$n^{\ocal(d / \varepsilon)}$.
In summary, the state-of-the-art for \ac{MDMTSP} is that there is no $\alpha$-approximation known for \ac{MDMTSP} for any absolute constant $\alpha < 2$ which runs in time $n^{o(d)}$.

\subsection{Our Results}
\label{sec:ourresults}
Our main result is a novel approximation algorithm for metric {\sc Multiple TSP} on $d$ depots with a significantly improved run time.
That is, we provide the first algorithm for metric \ac{MDMTSP} which breaks the $n^{\Theta(d)}$ time barrier to obtain an approximation ratio strictly better than~2.
Given an instance $(G,D,w)$ of metric \ac{MDMTSP}, we say a collection $C_1,\hdots,C_d$ of cycles is a \emph{tour} if they jointly cover all nodes of $G$ and each cycle contains exactly one depot from $D$.
\begin{theorem}
\label{thm:mainResult}
  There is an algorithm that, given any $\varepsilon > 0$, in time $(1/\varepsilon)^{\mathcal O(d\log d)}\cdot n^{\mathcal O(1)}$ computes a tour $T$ for any set of $n$ cities with metric distances and $d$ depots.
  The algorithm is randomized, and with constant probability the length of the tour $T$ is at most $(3/2 + \varepsilon)\cdot \OPT$.
\end{theorem}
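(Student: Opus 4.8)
The plan is to follow the Christofides--Serdyukov template adapted to depots, but to replace its single global parity correction by one that is \emph{guided by a small amount of guessed information} about the optimal solution. First I would compute a minimum-weight \ac{CSF} $F$, that is, a spanning forest in which every tree contains exactly one depot: deleting from each optimal cycle one edge incident to its depot turns an optimal tour into such a forest, so $w(F)\le\OPT$. Let $O$ be the set of odd-degree vertices of $F$. Since every component of a forest contains an even number of odd-degree vertices, there exists an $O$-join $J$ that pairs up the vertices of $O$ \emph{within their own trees}; adding such a $J$ keeps the $d$ depot-components separate, makes every component Eulerian, and lets us shortcut each component into one cycle through its depot. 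The theorem thus reduces to producing, in the stated time, a valid \ac{CSF} $F'$ together with a within-tree parity-correcting join $J'$ of total weight $w(F')+w(J')\le(3/2+\eps)\OPT$ (after rescaling $\eps$).

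The obstacle is the one illustrated in \cref{fig:CounterexampleChristofides}: for the \emph{original} forest $F$, the cheapest within-tree join can cost up to $\tfrac{d-1}{d}\OPT$. In the single-depot case the odd vertices all lie on one optimal Hamiltonian cycle, so they split into two alternating matchings of total cost $\le\OPT$ and the cheaper one costs $\le\tfrac12\OPT$; here the vertices of $O$ are scattered over the $d$ optimal cycles $C_1,\dots,C_d$, and a cycle carrying an \emph{odd} number of them forces a leftover vertex that must be matched across tree boundaries. Following Xu and Rodrigues, the cure is to \emph{rearrange} $F$ into a valid \ac{CSF} $F'$ with $w(F')\le(1+\eps)\OPT$ whose odd vertices \emph{can} be matched within trees at cost $\le(\tfrac12+\eps)\OPT$. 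The point I would exploit is that this rearrangement only needs to reconcile the tree-partition with the optimal-cycle-partition at $\ocal(d)$ places---essentially one re-routed connection per depot---so a near-optimal rearrangement is described by $\ocal(d)$ reconnection edges. It is exactly these $\ocal(d)$ edges that the earlier $n^{\Theta(d)}$ algorithms pin down by brute force over vertices.

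The key idea is to guess the rearrangement by the \emph{rounded cost} of its reconnections rather than by their endpoints. First I would round all distances to integer powers of $(1+\eps)$ and, using a random choice of scale together with truncation of negligibly short and prohibitively long edges, arrange that only $\ocal\!\big(\tfrac1\eps\log\tfrac d\eps\big)$ length classes are relevant and that this rounding distorts the relevant quantities by at most a $(1+\eps)$ factor with constant probability (this is the sole source of randomness and of the constant success probability). With distances so discretized, the combinatorial type of the rearrangement is captured by a bounded-degree assignment among the $d$ depots together with one length class per reconnection, of which there are at most $d!\cdot\big(\tfrac1\eps\log\tfrac d\eps\big)^{\ocal(d)}\le(1/\eps)^{\ocal(d\log d)}$. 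For each guessed type I would recompute, in polynomial time via standard matching, $T$-join, and constrained-forest subroutines, a \ac{CSF} $F'$ and a within-tree $O'$-join $J'$ that are \emph{consistent} with the guessed reconnection lengths, shortcut the resulting Eulerian components into a tour, and finally output the cheapest tour over all guesses; enumerating the guesses dominates the running time and gives $(1/\eps)^{\ocal(d\log d)}\cdot n^{\ocal(1)}$.

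The main obstacle is the structural rearrangement lemma behind the second and third steps: I must prove that there is a near-optimal rearrangement whose misalignment between the tree-partition and the optimal-cycle-partition is controlled by only $\ocal(d)$ reconnections, and that knowing just their \emph{rounded} lengths (not their endpoints) suffices for a polynomial-time subroutine to realize a consistent $F'$ and within-tree join $J'$ of total weight $\le(3/2+\eps)\OPT$. Making this precise is where the randomized rounding has to be charged at only an additive $\eps\OPT$, and where each guessed length class must be shown to correspond to an actually attainable forest-plus-join. The graphic case is easier on exactly this point: distances are integers, so no rounding and hence no randomization is needed, the reconnection data collapses to a subset of the $d$ depots, and enumerating these $2^d$ possibilities yields a deterministic $3/2$-approximation in time $2^d\cdot n^{\ocal(1)}$.
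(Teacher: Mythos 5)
Your high-level framework matches the paper's (rearrange the \ac{CSF} so that each optimal cycle contains an even number of odd-degree vertices, then complete with a join chargeable to $\tfrac12\OPT$), but your key mechanism has a genuine gap, and it is the one you yourself flag as "the main obstacle": you guess only the \emph{rounded lengths} of the $\ocal(d)$ reconnection edges and then assert that "standard matching, $T$-join, and constrained-forest subroutines" can, in polynomial time, realize a consistent $F'$ and within-tree join of total weight $\le(3/2+\eps)\OPT$. Knowing a length class per reconnection does not tell you which edges to use; you must simultaneously (a) reconnect the forest so that every component retains a depot, (b) fix the parities so that the join can be charged against the optimal cycles, and (c) stay consistent with the guessed lengths. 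This joint connectivity-plus-parity completion problem is essentially the \ac{RPP}, which is NP-hard in general and is only known to be tractable when parameterized by the number of components of the required edge set, via the nontrivial randomized algebraic algorithm of Gutin et al.\ (\cref{thm:k-RPP}). No standard polynomial-time matching or $T$-join subroutine solves it, so the proposal as written does not go through; the unproven "structural rearrangement lemma" is not a technicality but the entire content of the theorem.

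The paper resolves this differently on both counts. First, it avoids length-class guessing altogether: only the $\eps$-heavy edges of $F$ (at most $d/\eps$ of them, since each weighs at least $\tfrac{\eps}{d}w(F)$) ever need to be \emph{deleted}; misalignments caused by $\eps$-light edges are repaired by simply doubling those edges, at total extra cost $\eps\, w(F)$. Hence the deletion set $X$ can be enumerated \emph{exactly}, over $(d/\eps)^d$ candidates, which is where the $(1/\eps)^{\ocal(d\log d)}$ factor comes from. Second, the reconnection and parity correction are performed \emph{simultaneously} by the approximate \ac{DRPP} solver (\cref{cor:apxk-DRPP}, i.e.\ \cref{thm:k-RPP} combined with the weight-reduction scheme of \cref{lem:weightReduction}); the existence of a cheap solution to that \ac{DRPP} instance is only \emph{certified} by a matroid exchange argument (\cref{lem:AugmentationReplacement,lem:parityThroughRemoval}), never constructed explicitly. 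Note also that the randomness in the paper is inherited from the Schwartz--Zippel-based \ac{RPP} algorithm, not from a random rounding scale as in your plan, and the $+\eps$ in the ratio comes from the weight rounding needed to make that algorithm's running time independent of $\OPT$. Your graphic-case remark is closer to the mark (the paper does enumerate $2^d$ subsets of depots and exploit integrality), but for the main theorem the proposal has a hole exactly where the real work lies.
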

Thus, our result significantly improves on the previously best run time $n^{\Theta(d)}$ by Xu and Rodrigues \cite{XuR2015} at the cost of some small additive $\eps$ in the approximation factor.

To break through the barrier of 2 on the approximation ratio, we need to rework the initial spanner $F$ to be \enquote{correctly aligned} with the optimal solution so that each subtour contains an even number of odd-degree nodes, as initially proposed by Xu and Rodrigues~\cite{XuR2015}.
We show that an approximate reworking can be done in time $f(d,\eps)\cdot n^{\ocal(1)}$ for some suitable function~$f$, resulting in a $(3/2+\eps)$-approximation.
To this end, firstly, we give a reduction of metric \ac{MDMTSP} to a related routing problem which is known as the \ac{RPP}.
In the \ac{RPP}, we are given an edge-weighted graph $G$ and a set $R$ of required edges, and are asked to compute a minimum-weight edge set $F$ such that $R \dcup F$ is connected and Eulerian.
Our reduction reveals an approximation algorithm with run time $\ocal(n^3+t)$ to compute solutions no worse than $\frac32\OPT+\frac12w(T)$, where $w(T)$ is the weight of a single-person \ac{TSP} tour $T$ through the depots and $t$ denotes the time to compute~$T$.
Then we use a randomized algorithm of Gutin~et~al.~\cite{GutinWY2017} for the \ac{RPP}, and an approximate weight reduction scheme of van~Bevern~et~al.~\cite{vanBevernFT2020}, to construct a $(1+\eps)$-approximation algorithm for a variant of \ac{RPP} with depots.

We are then in a position to speed up the reworking of the inital spanner due to two key insights.
Firstly, we allow for some misalignment to remain, as long as it is only due to the presence of some \emph{light} edges, limiting the number of edges we have to consider for removal from $F$.
Secondly, we employ the constructed approximation algorithm for the \ac{RPP} to complete our now disconnected spanner to a tour. 
Doing this provides a large speedup over the algorithm of Xu and Rodrigues, who first need to guess a set of edges to reconnect their spanner, and then employ a matching algorithm to obtain a tour.
Using the \ac{RPP} allows us to do both of these steps simultaneously, and considerably faster.

An important special case of metric \ac{TSP} is when the metric $w$ is induced by the shortest paths in a graph.
This version is also known as {\sc Graphic TSP}, and has been studied extensively from the perspective of approximation algorithms~\cite{MomkeS2011, SeboV2014}.
For \ac{MDMTSP} on graphic metrics, we obtain a \emph{deterministic algorithm} with slightly better approximation factor and a reduced run time.
\begin{theorem}
\label{thm:mainResultGraphic}
  There is an algorithm that, given any graph $G$ on $n$ nodes and set $D\subseteq V(G)$ of $d$ depots, in time $2^{d}\cdot n^{\mathcal O(1)}$ computes a tour $T$ of length at most $\frac{3}{2}\cdot \OPT$.
\end{theorem}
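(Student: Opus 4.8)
The plan is to reuse the reduction to the \ac{RPP} that underlies \cref{thm:mainResult}, but to exploit the special structure of graphic metrics in order to solve the resulting instance \emph{exactly} and \emph{deterministically}, thereby removing both the additive $\tfrac12 w(T)$ term and the $\varepsilon$-loss. Recall that the reduction produces a solution of weight at most $\tfrac32\OPT + \tfrac12 w(T)$, where $T$ is a single-person \ac{TSP} tour through the depots; the $\varepsilon$ in \cref{thm:mainResult} enters only because the randomized \ac{RPP} routine of Gutin et al.\ has a running time depending on the edge weights, forcing the approximate weight-reduction scheme of van~Bevern et al. For graphic metrics the shortest-path distances are integral and bounded by $n-1$, so the weights are already polynomially bounded and no lossy reduction is needed. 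This is the single feature of graphic metrics that I would rely on to turn the $(3/2+\varepsilon)$ guarantee into a clean $3/2$.

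Concretely, I would first build the required-edge instance: compute a minimum \ac{CSF} $R$, a spanning forest whose trees each contain exactly one depot, which is found in polynomial time and satisfies $w(R)\le \OPT$, since deleting one edge from each cycle of an optimal tour yields such a forest. I then phrase the task of completing $R$ to a valid tour as a \ac{DRPP} instance: find a minimum-weight edge set $F$ so that $R\dcup F$ is Eulerian and its components align with the depots. The key accounting step is to show that an optimal completion has weight at most $\tfrac12\OPT$; this I would obtain by the usual Christofides-type shortcutting, taking the odd-degree vertices of $R$, noting that each tree contributes an even number of them by the handshake lemma, and charging the parity-correcting join against alternate edges of the optimal cycles on which these vertices lie. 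Combining $w(R)\le\OPT$ with $w(F)\le\tfrac12\OPT$ gives the bound $\tfrac32\OPT$.

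The running time I would derive from the fact that the \ac{RPP} is \ac{FPT} in the number of connected components of the required graph, which here is bounded by the number $d$ of depots. The natural dynamic program runs over subsets $S\subseteq D$, recording which depots have already been merged into a common Eulerian component, and combines this connectivity bookkeeping with a parity ($T$-join) computation; this yields a $2^{d}\cdot n^{\ocal(1)}$ procedure. Because the weights are small integers, the dynamic program can be carried out exactly, and no algebraic identity testing as in the general (randomized) case is required, so the whole procedure is deterministic.

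The main obstacle I anticipate is the interplay between the two constraints the completion must satisfy: the output must be Eulerian \emph{and} must decompose into exactly $d$ cycles, one per depot, so the subset dynamic program has to keep the $d$ required trees correctly separated (or deliberately merge and re-split them at the depots) rather than merely making $R\dcup F$ connected as in plain \ac{RPP}. Making the charging $w(F)\le\tfrac12\OPT$ compatible with this per-depot structure is exactly the \enquote{alignment} difficulty that cost Xu and Rodrigues their $n^{\Theta(d)}$ factor; the payoff of the \ac{DRPP} viewpoint is that enumerating the $2^{d}$ ways the depots' components can interact replaces that guessing, and I expect the bulk of the work to lie in verifying that this enumeration simultaneously preserves the $\tfrac12\OPT$ bound and always produces a genuine tour.
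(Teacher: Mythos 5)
Your plan has a fatal gap at its central accounting step: the claim that the minimum \ac{CSF} $R$ admits a parity-correcting completion of weight at most $\tfrac12\OPT$ is false, and the paper's own lower-bound instance (\cref{fig:CounterexampleChristofides}) --- which \emph{is} a graphic instance --- refutes it. Concretely, take a cycle of $d$ unit edges through $d-1$ cities $v_1,\dots,v_{d-1}$ and one depot $d_1$, and attach each remaining depot $d_{j+1}$ to $v_j$ by a pendant unit edge. The optimal tour is the cycle, of weight $d$; the depots $d_2,\dots,d_d$ have empty subtours. The minimum \ac{CSF} is the pendant matching $\{v_jd_{j+1}\}_{j}$ of weight $d-1$, and \emph{all} $2(d-1)$ of its endpoints have odd degree; since every pendant depot needs an incident edge of weight at least $1$ in any completion, even an \emph{exact} \ac{DRPP} solver run on top of this fixed $R$ returns a solution of weight at least $2(d-1)\approx 2\OPT$. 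Your handshake-lemma charging breaks precisely here: evenness of the odd-degree vertices holds per tree of $R$, but the $\tfrac12\OPT$ bound needs evenness per \emph{cycle of the optimum}, and the pendant depots lie on no optimal cycle at all. This is exactly the misalignment obstruction the paper describes; it can only be repaired by \emph{deleting or exchanging edges of the \ac{CSF}} (as Xu--Rodrigues and \cref{alg:1.5} do), and your $2^d$ enumeration over how the components of $R$ may merge never modifies $R$ itself, so it cannot escape the factor $2$.

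Two further claims are unsupported. First, the deterministic $2^{d}\cdot n^{\ocal(1)}$ dynamic program for \ac{DRPP}: the only known \ac{FPT} algorithm for \ac{RPP} parameterized by the number of components is the randomized one of Gutin et al.\ (\cref{thm:k-RPP}), whose randomness stems from algebraic identity testing (Schwartz--Zippel), \emph{not} from large weights; integral, polynomially bounded graphic distances remove the need for the weight-reduction step but do not derandomize anything --- the paper explicitly lists this derandomization as an open problem. (Relatedly, the $\eps$ in \cref{thm:mainResult} does not come only from weight reduction; it also comes from restricting the deletable \ac{CSF} edges to heavy ones.) The paper's actual proof of \cref{thm:mainResultGraphic} avoids this machinery entirely: it guesses the set $D'\subseteq D$ of depots whose optimal subtours are nonempty ($2^d$ choices), computes a minimum \ac{CSF} for the instance restricted to $D'$ --- of weight at most $\abs{E(T)}-\abs{D'}$, since deleting one edge per nonempty subtour of the optimum $T$ yields such a forest --- and then runs the plain matching step of \cref{alg:MDChristofides}. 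Because $G$ is connected and unweighted, the nonempty subtours of $T$ can be linked into a single closed walk by doubling $\abs{D'}-1$ unit edges, so all odd-degree nodes lie on one cycle and the matching costs at most $\tfrac12\bigl(\abs{E(T)}+2(\abs{D'}-1)\bigr)$; the \ac{CSF} saving of $\abs{D'}$ pays exactly for the linking, giving total weight at most $\tfrac32\abs{E(T)}-1\leq\tfrac32\OPT$, deterministically, with the $2^d$ guess as the only super-polynomial overhead.
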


\subsection{Related work}
\label{sec:relatedwork}
In the case of a single salesperson, i.\,e. $m=1$, B{\'e}rczi et al. \cite{BercziMV2022} gave a polynomial-time $3/2$-approximation for the \emph{many-visits} version of metric \ac{MDMTSP}, that is, when each node $v$ is equipped with a request $r(v)$ (encoded in binary) of how many times it should be visited.

In a different work, B{\'e}rczi et al. \cite{BercziMV2023} have shown constant-factor approximation algorithms with ratio at most $4$ for variants of metric \ac{MDMTSP} where each tour has to visit exactly one depot (and thus, $d \leq m$).

For the \ac{RPP}, which asks for a minimum-weight tour traversing all edges of a given subset $R$ of edges of a graph, there is a polynomial-time approximation algorithm (cf. Frederickson~\cite{Frederickson1979} or Jansen~\cite{Jansen1992}) similar to the approach of Christofides-Serdyukov for metric \ac{TSP}.
The weight of a solution can be bounded by $(3\OPT+w(R))/2$, where $w(R)$ is the weight of $R$.

Oberlin et al.~\cite{OberlinRD2009} studied heuristic approaches for \ac{MDMTSP} where $d=m$ and each salesperson is located at its own depot.

For the objective of minimizing the longest tour length of any salesperson (rather than the sum of all the tour lengths), Frederickson et al. \cite{FredericksonHK1978}, among other routing problems, considered the case of a single depot ($d=1$), and presented a $(\rho + 1 - 1/m)$-approximation algorithm where $\rho$ is the approximation ratio of an algorithm for the single-salesperson \ac{TSP}.

\section{Preliminaries}
\label{sec:preliminaries}
Let $\mathcal U$ be a finite universe.
For a function $w : \mathcal{U} \to \mathbb{R}$ and a multiset $U \subseteq \mathcal{U}$ we write $w(U)$ to mean $\sum_{u\in U} w(u)$, where the sum has an additional summand for each copy of an element in~$U$, i.e. it considers multiplicities.
The disjoint union $\dot \bigcup_i A_i$ of some sets $\{A_i\}_i$ is considered to be the multiset of all items in the collection.
For brevity we often write $2A$ to mean $A \dcup A$.

Throughout this paper, we consider the multiple-depot version of the metric {\sc Multiple TSP}, or metric \ac{MDMTSP} for short.
We will generally represent the metric by an edge-weighted graph whose shortest-path metric we assume to be the metric in use.
Notice that this makes no difference in our setting since we are allowed to traverse edges multiple times; only if this is forbidden does the non-metric case become relevant.

\begin{definition}
An instance $(G,D,w)$ of metric \ac{MDMTSP} consists of a complete graph~$G$, a set $D \subseteq V(G)$ of $d$ \emph{depots}, and a metric $w : E(G) \to \mathbb{R}_+$ on $V(G)$. 
A multiset of edges $T \subseteq E(G)$ is called a \emph{tour} of $(G,D,w)$ if\\
\vspace{0.1em}\\
\setlength\parindent{1em}\indent
\begin{minipage}{\textwidth-1em}
\begin{enumerate}
	\item[{\crtcrossreflabel{(P1)}[Prop:parity]}] the multigraph $(V(G),T)$ has even degree at every node in $V(G)$,
	\item[{\crtcrossreflabel{(P2)}[Prop:connectivity]}] and each connected component of $(V(G),T)$ contains at least one node from $D$.
\end{enumerate}
\end{minipage}\\
\vspace{0.1em}\\
We denote by $\OPT(G,D,w)$ the minimum weight of any tour of $(G,D,w)$. 
If the instance is clear from the context, we may only say $\OPT$.
\end{definition}
Edge sets are generally allowed to be multisets, and graphs can have parallel edges. 

Imitating the general framework of Christofides-Serdyukov \cite{Christofides1976, Serdyukov1978}, we first compute an edge set $F$, called a \ac{CSF}, that ensures the connectivity property~\ref{Prop:connectivity}. 
We then compute an additional set of edges $J$ such that $F\dcup J$ has property~\ref{Prop:parity}.
\begin{definition}
	Let $G$ be a graph and let $D \subseteq V(G)$.
  A \emph{constrained spanning forest} in $(G,D)$ is a set $F \subseteq E(G)$ of edges such that the graph $(V(G),F)$ is acyclic and every connected component of $(V(G), F)$ contains at least one node from $D$.
\end{definition}
We will make use of the following result.
\begin{theorem}[Rathinam et al. \cite{RathinamSD2007}]
  Given any graph $G$ on $n$ nodes and $m$ edges with weights $w:E(G)\rightarrow\mathbb R$ and a set $D\subseteq V(G)$, a minimum-weight \ac{CSF} of $(G,D,w)$ can be computed in time $\ocal((n+m)\log n)$.
\end{theorem}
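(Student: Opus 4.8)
The plan is to reduce the problem to computing an ordinary minimum spanning tree, for which the claimed running time is standard. The key structural observation is that constrained spanning forests of $(G,D)$ correspond to spanning trees of the graph $G'$ obtained from $G$ by \emph{contracting} the entire depot set $D$ into a single supernode $s$ (deleting the resulting self-loops, i.e.\ the former depot--depot edges, and keeping parallel edges). First I would verify both directions of this correspondence. Given a spanning tree $T'$ of $G'$, which has $n-d+1$ vertices and hence $n-d$ edges, un-contracting $s$ back into the $d$ depots turns the tree edges into an acyclic edge set $F$ on $V(G)$ with exactly $d$ components; since every non-depot vertex was joined to $s$ by a path in $T'$, every component of $F$ contains at least one depot, so $F$ is a \ac{CSF}, and clearly $w(F)=w(T')$. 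Conversely, any \ac{CSF} using no depot--depot edge and having exactly $d$ components contracts to a spanning tree of $G'$ of the same weight.

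Next I would argue that it suffices to optimise over this restricted class, i.e.\ that some minimum-weight \ac{CSF} has exactly $d$ components and no depot--depot edge. For a depot--depot edge $e$ in a \ac{CSF}, deleting $e$ splits its tree into two subtrees each still containing a depot (namely the two endpoints of $e$), so the result is again a \ac{CSF}; for nonnegative weights this does not increase the weight, so we may discard all such edges. Likewise, whenever two depots lie in a common tree, removing any edge on the path between them yields one more component while keeping every component depot-covered, again without increasing the weight. Iterating, some optimum is a maximal \ac{CSF} with exactly $d$ components, which by the correspondence equals a minimum spanning tree of $G'$. (For the general real-weighted statement one additionally dispenses with the nonnegativity assumption by greedily retaining the beneficial negative-weight edges first, a routine matroid-style adjustment; in our metric application weights are nonnegative and this step is unnecessary.)

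Finally, I would bound the running time: contracting $D$ and later un-contracting $s$ each cost $\ocal(n+m)$, and a minimum spanning tree of $G'$ can be computed in $\ocal((n+m)\log n)$ by Kruskal's or Prim's algorithm with a union--find or heap data structure, which dominates and matches the claimed bound. The main subtlety to get right is the correspondence of the first step together with the reduction of the second: one must ensure that contracting a whole \emph{set} of depots (rather than a single node) and discarding depot--depot edges neither loses an optimal solution nor introduces spurious cycles, and that the un-contraction reassigns each former $s$-edge to its correct depot endpoint so that acyclicity and depot-coverage are simultaneously preserved.
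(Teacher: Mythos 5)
Your proposal is correct and takes essentially the same route as the paper: both reduce the problem to a single minimum spanning tree computation by merging the depots---you contract $D$ into a supernode $s$, while the paper equivalently adds a root $r$ joined to every depot by edges cheaper than all other edges, which forces that star into the MST and plays exactly the role of your contraction. If anything, your write-up is more careful, since you explicitly prove the exchange step that some minimum-weight constrained spanning forest has exactly $d$ components and no depot--depot edges (valid for the nonnegative weights of the metric setting), a point the paper's proof leaves implicit.
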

\begin{proof}
	The computation of a minimum-weight \ac{CSF} for $(G,D,w)$ can be reduced to computing a minimum-weight spanning tree in a graph $G'$ with edge weights $w'$.
  The graph $G'$ is obtained from $G$ by adding a single root node $r$ connected to every depot by an edge of some weight less than the weight all other edges in $G$.
	Kruskal's algorithm is guaranteed to choose these edges for the minimum-weight spanning tree in $(G',w')$, and after removing $r$ we are left with a minimum-weight \ac{CSF} for $(G,D,w)$.
\end{proof}
Traditionally, property \ref{Prop:parity} is obtained by computing a minimum-weight matching on the odd-degree nodes of the \ac{CSF}, as in \cref{alg:MDChristofides}.
\begin{algorithm}[ht]
	\SetAlgoLined
	\KwIn{A metric \ac{MDMTSP} instance $(G, D, w)$.}
	\KwOut{A tour $T$ with $w(T) \leq 2\OPT$.}
	Compute a minimum-weight \ac{CSF} $F$ for $(G,D,w)$\;
	Let $U$ be the set of nodes with odd degree in $F$\;
	Compute a minimum-weight perfect matching $M$ in $G[U]$\;
	\KwResult{$T := F \dcup M$}
	\caption{Algorithm \textsc{Multi-Depot Christofides-Serdyukov}\label[algorithm]{alg:MDChristofides}}
\end{algorithm}
However, this algorithm only achieves a tight approximation ratio of $2-\frac{1}{d}$ for the multi-depot setting, as shown by Xu et al. \cite{XuXR2011} (see \cref{fig:CounterexampleChristofides} for a simplified version of their lower bound example), because the matching can have weight $\frac{d-1}{d}\OPT$. 
To avoid needing such an expensive matching, the constrained spanning forest needs to be rearranged such that there is again a matching of weight $\frac{1}{2} \OPT$, as in the work of Xu and Rodrigues \cite{XuR2015}.

%

\section{Reducing Multi-Depot Multiple TSP to Rural Postperson Problem}
\label{sec:reduction}
In this section we show a reduction from the metric \ac{MDMTSP} to the \ac{RPP}.
Recall that in the \ac{RPP} there is a required set $R$ of edges that a tour should traverse, rather than a set of nodes. 
\begin{definition}[Rural Postperson Problem]
  An instance $(G,R,w)$ of \ac{RPP} consists of a graph~$G$, a set $R \subseteq E(G)$ of required edges, and a metric weight function $w : E(G) \to \mathbb{R}_{\geq 0}$.
  A \emph{solution} is multiset $J \subseteq E(G)$ for which $(V, R\dcup J)$ is Eulerian, and which has only one non-singleton connected component.\footnote{This means that nodes not incident to any edge from $R$ do not need to be visited by the computed tour. For metric cost functions, however, one can always reduce to the case where $R$ spans $G$.}
  The weight of a solution $J$ is $w(J) = \sum_{e\in J}w(e)$.
  The goal of \ac{RPP} is to compute an \emph{optimal} solution, which is a solution of minimum weight~$\OPT(G,R,w)$.
\end{definition}

There is a polynomial-time approximation algorithm for \ac{RPP} \cite{Frederickson1979, Jansen1992} which computes a solution $J \subseteq E(G)$ such that $w(R\dcup J) \leq \frac{3}{2} w(R\dcup J^*)$, i.\,e. $w(J) \leq \frac32\OPT+\frac12w(R)$, where~$J^*$ is some optimal solution.
Due to the first inequality and the unavoidable weight of~$R$, the algorithm is known as a $3/2$-approximation for \ac{RPP}.
This situation is very similar to the current approximation status of metric \ac{MDMTSP}, where we can obtain a $3/2$-approximation if we allow for some additional additive term.
This observation motivates the following reduction from metric \ac{MDMTSP} to \ac{RPP}.
\begin{observation}
\label{obs:MDMTSPtoRPP}
  For each instance $(G,D,w)$ of \ac{MDMTSP} there is an instance $(G', R, w')$ of \ac{RPP} such that any solution to the \ac{RPP} instance can be transformed in polynomial time into a solution to the \ac{MDMTSP} instance of the same weight. 
\end{observation}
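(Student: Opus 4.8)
The plan is to exploit the tight structural correspondence between a tour of $(G,D,w)$ and a solution of a suitably chosen \ac{RPP} instance. Both objects are even-degree multigraphs, and the only genuine difference lies in connectivity: a tour of \ac{MDMTSP} may fall apart into several components, each of which must meet the depot set $D$ (property~\ref{Prop:connectivity}), whereas an \ac{RPP} solution is required to be connected. I would bridge this gap by planting into the required set $R$ a cheap structure that links all depots; solving the connected \ac{RPP} and then \emph{deleting} this structure should break the single component back into several depot-containing ones, while the even-degree property~\ref{Prop:parity} is preserved on both sides.

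Concretely, I would set $G' = G$ and $w' = w$, and let $R$ be the edge set of a single closed walk $T$ visiting every depot in $D$ (for instance a \ac{TSP} tour through $D$), so that $w(R) = w(T)$ and every depot has even degree in $R$. Since $w$ is a metric, I would invoke the reduction noted after the definition of \ac{RPP} to assume without loss of generality that $R$ spans $G$; this is the device that forces every non-depot city to be visited, which is exactly what an \ac{MDMTSP} tour demands but an \ac{RPP} solution does not by default. For the transformation itself, take any \ac{RPP} solution $J$, so that $(V(G), R \dcup J)$ is Eulerian and connected, and output $T_{\mathrm{out}} := (R \dcup J) \setminus E(T)$. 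The verification then splits into three routine checks. First, property~\ref{Prop:parity}: as $T$ meets each depot an even number of times and touches no other node, deleting $E(T)$ changes every degree by an even amount, so all degrees remain even. Second, property~\ref{Prop:connectivity}: the only edges of $R \dcup J$ that can join two distinct components of $T_{\mathrm{out}}$ are edges of $T$, and these are incident only to depots, so every non-trivial component of $T_{\mathrm{out}}$ retains a depot, while any node left isolated is itself a depot. Third, the weight is preserved exactly, $w(T_{\mathrm{out}}) = w(R \dcup J) - w(T) = w(J)$ using $w(R)=w(T)$, so the produced tour has the same weight as the \ac{RPP} solution; all steps clearly run in polynomial time.

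The main obstacle I anticipate is not the deletion step but guaranteeing coverage of the non-depot cities cheaply: an \ac{RPP} solver is free to ignore any node not incident to a required edge, yet an \ac{MDMTSP} tour must cover all of $V(G)$. Handling this cleanly is precisely the role of the metric spanning reduction, and some care is needed to ensure it forces coverage without inflating $w(R)$ beyond $w(T)$ and without creating, after the deletion of $E(T)$, a component that avoids $D$. Finally, although the statement asks only for the forward transformation, it is worth recording that the converse inequality $\OPT(G',R,w') \le \OPT(G,D,w)$ comes essentially for free and is what makes the reduction useful downstream: given an optimal \ac{MDMTSP} tour $T^{*}$, the choice $J := T^{*}$ makes $R \dcup J = T \dcup T^{*}$ Eulerian (both summands have even degree) and connected (since $T$ links the depots of the individual components of $T^{*}$), with $w(J) = \OPT$.
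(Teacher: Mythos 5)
Your overall template---plant a depot-connecting structure into $R$, solve the \ac{RPP}, then delete that structure to recover depot-containing components---matches the spirit of the paper's reduction, and your deletion, parity and weight computations are locally sound. But there is a genuine gap at exactly the point you flag: coverage of the non-depot cities, and the fix you propose does not exist. In your instance, $R$ consists solely of the closed walk $T$ through $D$, so no non-depot node is incident to a required edge. Consequently $J = \emptyset$ is a valid (indeed optimal) solution of $(G,R,w)$: the closed walk $T$ already has even degree everywhere and forms the unique non-singleton component. Your transformation then outputs $T_{\mathrm{out}} = \emptyset$, which is not a tour of $(G,D,w)$, since every non-depot city is a singleton component without a depot, violating \ref{Prop:connectivity}. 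The footnote you invoke cannot repair this, because it goes in the opposite direction: it says that nodes not incident to $R$ need not be visited, so one may \emph{delete} them and thereby assume $R$ spans $G$; it is not a device that forces an \ac{RPP} solution to visit nodes that $R$ does not touch, and there is no way to ``assume $R$ spans $G$'' in your construction without actually adding edges to $R$---which is the whole difficulty.

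The paper closes this gap with a gadget your proposal is missing: keep the depot tour $S$ in $R$, but additionally, for each non-depot node $v$, introduce a \emph{new} pendant node $v'$ and put \emph{two} copies of a zero-weight edge $e_v = \{v,v'\}$ into $R$. Now every $v'$, and hence every $v$, is incident to required edges, so any \ac{RPP} solution must place $v$ in the unique non-singleton component; the doubling keeps the degree contribution at $v$ even, and the zero weight keeps the weight correspondence exact. Since $v'$ has no other incident edges, any path in $R \dcup J$ from $v$ to a depot must consist of $J$-edges until it first reaches a depot (edges of $S$ join depots to depots only), which is precisely the connectivity argument you already gave---but that argument only becomes available once coverage has been enforced by the gadget. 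Note also that this forces the host graph to change ($G' \neq G$), whereas your proposal insists on $G' = G$; the new nodes $v'$ are what make it impossible for the \ac{RPP} solver to shortcut past the coverage requirement.
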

\begin{proof}
  First, compute any \ac{TSP} tour $S$ on the depots in $G$, that is, on $G[D]$.
  Then, for each node $v\in V\setminus D$, introduce a second node $v'$, as well as an edge $e_v = \{v,v'\}$, and set its weight to $w'(e_v) = 0$.
  For each edge $e\in E(G)$ set $w'(e) = w(e)$, and set $R$ to be the union of $S$ and two copies of each $e_v$.
  The any solution to the constructed instance of \ac{RPP} corresponds to an \ac{MDMTSP} tour for $(G,D,w)$ of the same weight.
\end{proof}
Notice that this reduction, together with the $3/2$-approximation for \ac{RPP}, allows us to compute a solution to \ac{MDMTSP} of weight at most $\frac{3}{2}\OPT + \frac{1}{2}w(S)$.
In particular, if all depots are pairwise close to each other this is already a better-than-2 approximation.
    
In \cref{sec:formalAlgorithmDescription} we will in some sense show a stronger result that there is also a (Turing) reduction from \ac{MDMTSP} to the special case of \ac{RPP} where $(V,R)$ has few connected components, which has been shown by Gutin et al. to be tractable \cite{GutinWY2017}:
\begin{proposition}[Gutin et al. \cite{GutinWY2017}]
\label{thm:k-RPP}
  There is a randomized algorithm for \ac{RPP} that for any instance $(G,R,w)$, where $(V(G),R)$ has $k$ connected components and $w$ takes only integer values, in time $2^{\ocal(k)}(n + \OPT(G,R,w))^{\ocal(1)}$ produces a solution.
  With constant probability, the computed solution is optimal.
\end{proposition}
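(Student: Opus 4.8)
The plan is to reformulate \ac{RPP} as a constrained minimum-weight edge-selection problem and then solve that problem by an algebraic, inclusion–exclusion based search whose correctness is guaranteed with constant probability through randomization. First I would observe that a multiset $J$ makes $(V(G), R\dcup J)$ Eulerian with a single non-trivial component if and only if $J$ satisfies two \emph{global} conditions simultaneously: a parity condition, namely that the set of vertices of odd degree in $J$ equals the set $T$ of vertices of odd degree in $(V(G),R)$ (so that $J$ is a $T$-join), and a connectivity condition, namely that $J$ joins the $k$ connected components of $(V(G),R)$ into one. Contracting each component of $R$ to a single terminal, the connectivity condition becomes the demand that $J$ connect $k$ prescribed terminals $t_1,\dots,t_k$, while the remaining (non-required) vertices may be used freely as Steiner vertices. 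Thus the task reduces to finding a minimum-weight $T$-join that additionally connects $t_1,\dots,t_k$.

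Next, since $w$ is integer-valued and the running time is permitted to depend polynomially on $\OPT$, I would guess the optimal weight $W\le \OPT$ among the polynomially many candidates and reduce to a \emph{detection} problem: does there exist a valid $J$ of weight exactly $W$? To turn detection into a single algebraic evaluation I would apply the Isolation Lemma of Mulmuley, Vazirani and Vazirani: perturbing the edge weights by independent random integers from a polynomially bounded range ensures that, with constant probability, a unique minimum-weight valid $J$ survives. This uniqueness is precisely what lets a parity-based counting argument certify existence without spurious cancellation.

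The heart of the algorithm is then to count, modulo two and weighted by the perturbed weights, the number of edge multisets $J$ of total (original) weight $W$ that respect the parity constraint, while enforcing connectivity by inclusion–exclusion over subsets of the terminal set $\{t_1,\dots,t_k\}$. This is exactly the mechanism underlying single-exponential algorithms for \textsc{Steiner Tree} parameterized by the number of terminals: the connectivity demand on $k$ terminals is expressed as an alternating sum over the $2^{k}$ subsets of terminals a candidate structure is allowed to miss, and each term is a connectivity-oblivious weighted count of parity-correct edge selections, computable in time $\poly(n,\OPT)$ via a generating-function recurrence (or, equivalently, via a Pfaffian/$T$-join evaluation). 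Summing the $2^{\ocal(k)}$ terms and reading off the coefficient of the monomial encoding weight $W$ detects the isolated optimum.

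The step I expect to be the main obstacle is reconciling the two global constraints within one evaluation. Connectivity is naturally handled by a degree-agnostic inclusion–exclusion transform, whereas the parity ($T$-join) constraint is naturally a matching-type condition; combining them requires working in an algebra in which parity-correct selections are exactly the surviving terms and in which disconnected selections cancel, all while carrying the bookkeeping of the weight $W$ (hence the $\poly(\OPT)$ factor) and preserving the single-exponential $2^{\ocal(k)}$ dependence. Getting the isolation probability, the modular count, and the weight encoding to interact correctly—so that a unique optimal $J$ is reported with constant probability—is the delicate part; the remaining estimates on running time and success probability are then routine.
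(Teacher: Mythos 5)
This proposition is not proved in the paper at all: it is imported verbatim from Gutin, Wahlstr\"om and Yeo, and the paper's Discussion even records the salient feature of their proof, namely that it hinges on Schwartz--Zippel-style polynomial identity testing. Judged on its own merits, your attempt starts correctly --- recasting an RPP solution as a minimum-weight $T$-join that must additionally connect the $k$ contracted components is exactly the standard ``Eulerian extension'' view --- but the engine you then propose does not work. The inclusion--exclusion you invoke is the one behind Nederlof-style \textsc{Steiner Tree} algorithms, and there it is sound only because the counted objects are \emph{branching walks}, which are connected by definition; the $2^{k}$-term sieve merely enforces that every terminal is \emph{visited}. Once you switch to counting edge multisets --- which you must, since the $T$-join parity condition lives on edge sets, not walks --- the sieve $\sum_{X}(-1)^{\abs{X}}(\text{parity-correct selections of weight } W \text{ avoiding } X)$ counts all parity-correct selections of weight $W$ that \emph{touch} every terminal, including ones whose terminals lie in different connected components. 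Such disconnected terms do not cancel, with or without the Isolation Lemma, so the detection step returns false positives for weights $W < \OPT$. The step you yourself flag as ``the delicate part'' is therefore not a routine verification; it is precisely where the proof has to live, and as described it fails.

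For comparison, Gutin et al.\ sidestep this tension rather than resolving it head-on: they reduce RPP to a matching problem with ``conjoining'' side constraints, in which a single perfect matching simultaneously encodes the parity correction (matchings are parity objects, so no separate $T$-join bookkeeping is needed) and the pairs of components that get linked; the required connection pattern over the $k$ components is handled by a sieve contributing the $2^{\ocal(k)}$ factor. Non-vanishing is then tested algebraically over a field of characteristic two with random substitutions via Schwartz--Zippel, where distinct matchings contribute distinct multilinear monomials --- so no Isolation Lemma is needed to prevent cancellation --- and edge weights are carried as exponents of a formal indeterminate, which is the actual source of the pseudo-polynomial $(n+\OPT(G,R,w))^{\ocal(1)}$ factor in the running time. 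If you wanted to salvage your outline, you would need either a connected-by-construction combinatorial object that also carries $T$-join parity, or a cut-and-count argument; but cut-and-count requires an efficient way to count (solution, consistent cut) pairs, which is normally supplied by a bounded-treewidth dynamic program and is unavailable here.
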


However, as our reduction is only $(1+\eps)$-approximate with respect to the solution qualities, and needs time exponential in $d$ and $\eps$, we need to remove the polynomial dependence on $\OPT(G,R,w)$ in \cref{thm:k-RPP}.
To this end, we will adapt an approximate weight reduction scheme by van Bevern et al. \cite{vanBevernFT2020}:

\begin{lemma}[{adapted from van Bevern et al. \cite[Lemma 2.12]{vanBevernFT2020}}]
\label{lem:weightReduction}
  Let $(G,R,w)$ be an instance of \ac{RPP} with integral weighs, let $\varepsilon>0$, and let $\beta = \max\{w(e) \mid e\in E(G)\}$.
  Then in polynomial time we can compute a weight function $w': E(G) \to \mathbb{N}_{\geq 0}$ such that
  \begin{itemize}
    \item $\max\{w'(e) \mid e \in E(G)\} \leq 2\abs{E(G)} / \eps$,
    \item and for all $\alpha \geq 1$, any solution $J$ to $(G,R,w')$ with weight $w'(J) \leq \alpha \OPT(G,R,w')$ also fulfills $w(J)~\leq~\alpha\OPT(G,R,w)~+~\eps \beta$, as long as $J$ contains at most two copies of each edge.
   \end{itemize}
\end{lemma}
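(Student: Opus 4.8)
The plan is to obtain $w'$ from $w$ by a single scale-and-round step. Concretely, I would fix the scaling factor $\delta := \varepsilon\beta / (2\abs{E(G)})$ and define $w'(e) := \floor{w(e)/\delta}$ for every edge $e \in E(G)$ (handling the degenerate case $\beta = 0$ by simply setting $w' \equiv 0$, where everything holds trivially). This is computable in polynomial time and clearly produces a function into $\mathbb{N}_{\geq 0}$. The bound on the largest weight is then immediate: since $w(e) \leq \beta$ for all $e$, we get $w'(e) \leq w(e)/\delta \leq \beta/\delta = 2\abs{E(G)}/\varepsilon$, which is the first claimed property.

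For the second property the starting point is the two-sided estimate $\delta\, w'(e) \leq w(e) < \delta\,(w'(e)+1)$, which is exactly what the floor provides. Summing the right-hand inequality over the multiset of edges of a solution $J$ gives $w(J) \leq \delta\, w'(J) + \delta\abs{J}$, where $\abs{J}$ counts multiplicities. This is precisely where the hypothesis that $J$ contains at most two copies of each edge enters: it yields $\abs{J} \leq 2\abs{E(G)}$, so the accumulated rounding error is at most $\delta \cdot 2\abs{E(G)} = \varepsilon\beta$, giving $w(J) \leq \delta\, w'(J) + \varepsilon\beta$.

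It then remains to turn the multiplicative guarantee $w'(J) \leq \alpha\OPT(G,R,w')$ into a guarantee against $\OPT(G,R,w)$. The key observation is that feasibility for the \ac{RPP} (the Eulerian and single-non-singleton-component conditions) is purely combinatorial and independent of the weight function, so $(G,R,w)$ and $(G,R,w')$ share the same set of feasible solutions. Taking an optimal solution $J^*$ for $(G,R,w)$, it is feasible for $(G,R,w')$, and from $\delta\, w'(e) \leq w(e)$ we obtain $\delta\OPT(G,R,w') \leq \delta\, w'(J^*) \leq w(J^*) = \OPT(G,R,w)$. Since $\alpha \geq 1 > 0$, multiplying preserves the inequality, and chaining everything gives $w(J) \leq \delta\, w'(J) + \varepsilon\beta \leq \alpha\,\delta\OPT(G,R,w') + \varepsilon\beta \leq \alpha\OPT(G,R,w) + \varepsilon\beta$, as required.

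I expect the only real subtlety to be controlling the total rounding error, and this is the step I would present most carefully. Each edge contributes an additive slack of at most $\delta$ when passing from $\delta\, w'$ back to $w$, so without a cap on the number of edge-copies in $J$ the error is uncontrolled; the \enquote{at most two copies} assumption is exactly what bounds the number of edge-uses by $2\abs{E(G)}$ and, together with the choice of $\delta$, calibrates the total error to the target $\varepsilon\beta$. It is also worth emphasizing that rounding \emph{down} (rather than up) is what guarantees $\delta\OPT(G,R,w')$ underestimates $\OPT(G,R,w)$, which is the direction needed for the $\alpha$-factor to transfer cleanly; rounding up would have inverted this comparison and broken the argument.
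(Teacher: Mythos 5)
Your proposal is correct and follows essentially the same route as the paper: the identical scale-and-floor rounding $w'(e) = \floor{w(e)\cdot 2\abs{E(G)}/(\eps\beta)}$, the same two-sided estimate, and the same use of the at-most-two-copies hypothesis to bound the total rounding error by $\eps\beta$. The only difference is that you explicitly carry out the final chaining step (an optimal solution for $w$ is feasible for $w'$ and rounding down makes $\OPT(G,R,w')$ an underestimate), which the paper compresses into the remark that the two weight functions agree up to scaling and an additive $\eps\beta$.
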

\begin{proof}
	The rounding scheme simply sets $w'(e) := \lfloor w(e)\cdot \frac{2|E(G)|}{\varepsilon \cdot \beta} \rfloor$ for each edge $e$ of $G$.
	This yields the first condition, by definition. 
	For the second condition, observe that for any~$J$ we have 
	\begin{equation*}
		\frac{\varepsilon \cdot \beta}{2|E(G)|} w'(J) \leq w(J) \leq \frac{\varepsilon \cdot \beta}{2|E(G)|} w'(J) + \frac{\varepsilon \cdot \beta}{2|E(G)|}|J| \leq \frac{\varepsilon \cdot \beta}{2|E(G)|} w'(J) + \varepsilon \beta \enspace .
	\end{equation*}
	Hence, the two weight functions are equivalent up to scaling by a constant and the addition of at most $\varepsilon \beta$.
\end{proof}
Notice that the restriction on $J$ having at most two copies of each edge is never a problem: whenever a solution to the \ac{RPP} has three or more copies of one edge, we can delete two of them to obtain a cheaper solution.

We will combine \cref{thm:k-RPP} and \cref{lem:weightReduction} to obtain an approximation for $k$-component \ac{RPP} whose run time does not depend on $\OPT$.
\begin{corollary}
\label{cor:apxk-RPP}
  There is a randomized algorithm that, for any $\eps > 0$, in time $2^{\ocal(k)}(n + \frac{1}{\eps})^{\ocal(1)}$ computes a solution~$J$ for any instance $(G,R,w)$ of \ac{RPP} where $(V(G),R)$ has $k$ connected components.
  The computed solution $J$ has the property that, with constant probability, $w(J) \leq \OPT(G,R,w) + \eps \max\{ w(e) \mid e \in J^* \dcup R\}$, where~$J^*$ is some optimal solution to the instance.
\end{corollary}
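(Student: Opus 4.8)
The plan is to combine \cref{thm:k-RPP} and \cref{lem:weightReduction} in the natural way: first apply the weight reduction to replace the original weights $w$ by a rounded version $w'$ with polynomially bounded values, then run the randomized exact algorithm of Gutin et al. on the rounded instance, and finally argue that an optimal solution for $w'$ translates back to an approximate solution for $w$. The point of combining them is precisely that \cref{thm:k-RPP} has a run time depending polynomially on $\OPT(G,R,w)$, which can be exponential in the input size; the rounding scheme of \cref{lem:weightReduction} caps the maximum edge weight at $2\abs{E(G)}/\eps$, which in turn bounds $\OPT(G,R,w')$ by a polynomial in $n$ and $1/\eps$, removing the offending dependence.

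First I would invoke \cref{lem:weightReduction} with the given $\eps$ to obtain in polynomial time the rounded weight function $w'$, recording that $\max\{w'(e) \mid e \in E(G)\} \leq 2\abs{E(G)}/\eps$. Since any optimal solution to the $w'$-instance uses at most two copies of each edge (as noted in the remark after \cref{lem:weightReduction}, three or more copies can always be deleted for a cheaper solution), the weight $\OPT(G,R,w')$ is at most, say, $2\abs{E(G)} \cdot \max_e w'(e)$, which is bounded by $\poly(n,1/\eps)$. Next I would run the algorithm of \cref{thm:k-RPP} on the integral instance $(G,R,w')$; by the bound just established, its run time $2^{\ocal(k)}(n + \OPT(G,R,w'))^{\ocal(1)}$ becomes $2^{\ocal(k)}(n + 1/\eps)^{\ocal(1)}$, matching the claimed run time, and with constant probability it returns a solution $J$ that is optimal for $w'$, i.e. $w'(J) = \OPT(G,R,w')$.

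It then remains to apply the second bullet of \cref{lem:weightReduction} with $\alpha = 1$: since $J$ satisfies $w'(J) \leq 1 \cdot \OPT(G,R,w')$ and contains at most two copies of each edge, we get $w(J) \leq \OPT(G,R,w) + \eps\beta$, where $\beta = \max\{w(e) \mid e \in E(G)\}$. The only discrepancy with the statement of the corollary is that the additive error is phrased there as $\eps\max\{w(e)\mid e\in J^* \dcup R\}$ rather than $\eps\beta$. I would close this gap by a preprocessing step: edges so heavy that no optimal solution could ever use them can be discarded before rounding. Concretely, an optimal solution $J^*$ together with $R$ already forms a valid tour, so any edge strictly heavier than $\max\{w(e)\mid e \in J^* \dcup R\}$ is useless and can be deleted from $G$ without changing the optimum; after this deletion $\beta$ equals the desired quantity. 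I expect this last alignment of the error term to be the main obstacle, since $J^*$ is not known in advance; the resolution is to observe that it suffices to delete edges heavier than the weight of the heaviest edge in \emph{some} fixed optimal solution, and that the approximation guarantee of \cref{lem:weightReduction} is monotone under such deletions, so the bound $\eps\max\{w(e)\mid e\in J^*\dcup R\}$ follows for whichever optimal $J^*$ realizes the maximum.
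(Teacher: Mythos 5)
Your core pipeline --- round the weights with \cref{lem:weightReduction}, run the exact algorithm of \cref{thm:k-RPP} on the rounded instance (whose optimum is now bounded by $\poly(n,1/\eps)$, so the run time becomes $2^{\ocal(k)}(n+\frac{1}{\eps})^{\ocal(1)}$), and invoke the second bullet of the lemma with $\alpha=1$ --- is exactly the paper's argument, and your run-time accounting is correct. The gap is in the last step, where you align the additive error $\eps\beta$ with $\eps\max\{w(e)\mid e\in J^*\dcup R\}$. Your preprocessing deletes all edges heavier than $\max\{w(e)\mid e\in J^*\dcup R\}$, but that threshold is defined in terms of an optimal solution $J^*$ the algorithm does not know; ``observing that it suffices'' to delete relative to some fixed optimal solution is not an executable step, and the monotonicity remark does not tell the algorithm where to cut. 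As written, the only threshold you can actually compute is $\beta=\max\{w(e)\mid e\in E(G)\}$, which yields only the weaker bound $w(J)\leq \OPT(G,R,w)+\eps\max_{e\in E(G)}w(e)$.

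The fix --- and the paper's actual move --- is to \emph{guess} the threshold: the heaviest edge of $J^*\dcup R$ has one of at most $\abs{E(G)}$ possible weights, so run the entire pipeline once for each candidate value of $\beta$ (delete all edges heavier than $\beta$, round, solve), and return the cheapest solution found over all runs. Every run outputs a feasible solution for the original instance, since deleting edges only restricts the solution space; and in the run where $\beta$ equals the correct value, the pruned graph still contains $J^*$, so the optimum is unchanged and your analysis gives $w(J)\leq \OPT(G,R,w)+\eps\beta$ with constant probability. Taking the minimum over all runs therefore preserves the guarantee, at the cost of a polynomial factor $\abs{E(G)}$ in the run time, which disappears into the $n^{\ocal(1)}$ term.
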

\begin{proof}
  We first guess the weight $\beta$ of the most expensive edge in $J^* \dcup R$, where $J^*$ is some optimal solution.
  There are only $\abs{E(G)}$ options, so the guessing generates only polynomial overhead. 
  All edges that are more expensive than $\beta$ can be removed from the instance to get some graph $G'$.
  Now we apply \cref{lem:weightReduction} to get an instance with weights $w'$ bounded by $2\abs{E(G)}/\eps$ and use the exact algorithm from \cref{thm:k-RPP} to get a solution $J$ to $(G',R,w')$ in time $2^{\ocal(k)}(n + \frac{1}{\eps})^{\ocal(1)}$.
  From \cref{lem:weightReduction} with $\alpha = 1$ we know that
  \begin{equation*}
    w(J) \leq \OPT(G,R,w) + \eps \beta \leq \OPT(G,R,w) + \eps \max\{ w(e) \mid e \in J^* \dcup R\},
  \end{equation*}
  which proves the claim.
\end{proof}
    
We will be using this algorithm to complete partial solutions to instances of \ac{MDMTSP}.
We will need only a slight modification that allows for the presence of depots as follows.
\begin{definition}[Depot Rural Postperson Problem]
  An instance $(G,D,R,w)$ of the \ac{DRPP} consists of an \ac{RPP} instance $(G,R,w)$ and some depots $D \subseteq V(G)$.
  A \emph{solution} is a multiset $J \subseteq E(G)$  such that $(V, R\dcup J)$ is Eulerian and each non-singleton connected component of $(V,R\dcup J)$ contains at least one depot.
  The weight of a solution $J$ is $w(J) = \sum_{e\in J}w(e)$.
  The goal is to compute an \emph{optimal} solution, which is a solution of minimum weight $\OPT(G,D,R,w)$.
\end{definition}

The depot version \ac{DRPP} can be reduced to regular \ac{RPP} quite easily.
\begin{corollary}
\label{cor:apxk-DRPP}
  There is a randomized algorithm that, for any instance $(G,D,R,w)$ of \ac{DRPP} where $(V(G),R)$ has $k$ connected components and any $\eps > 0$, in time $2^{\ocal(k\log k)}(n + \frac{1}{\eps})^{\ocal(1)}$ computes a solution~$J$ such that, with constant probability, $w(J) \leq (1+\eps)\OPT(G,D,R,w) + \eps w(R)$.
\end{corollary}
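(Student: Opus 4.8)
The plan is to give a weight-preserving reduction from \ac{DRPP} to ordinary \ac{RPP} that raises the number of components of the required edge set by only one, and then to invoke \cref{cor:apxk-RPP} as a black box. Given an instance $(G,D,R,w)$ in which $(V(G),R)$ has $k$ components, I would build an \ac{RPP} instance $(G',R',w)$ as follows: introduce a fresh node $r$ together with a zero-weight edge $\{r,d\}$ for every depot $d\in D$, adding these edges to $E(G')$ but \emph{not} to $R'$. To force $r$ into the required structure, I attach at $r$ a tiny zero-weight required gadget — concretely, a second fresh node $r'$ and two parallel required copies of the zero-weight edge $\{r,r'\}$ — and set $R'=R\dcup\{\text{these two copies}\}$. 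Then $(V(G'),R')$ has exactly $k+1$ components and every new edge has weight $0$. The free edges $\{r,d\}$ act as a ``teleporter'' between depots, which is precisely what lets a single connected \ac{RPP} component model the several depot-containing components allowed by \ac{DRPP}.

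I would then show $\OPT(G',R',w)=\OPT(G,D,R,w)$, with solutions converting in both directions at no change in weight. For the forward direction, take an optimal \ac{DRPP} solution with Eulerian components $T_1,\dots,T_p$, each containing a depot $d_j$; adding two parallel copies of $\{d_j,r\}$ for each $j$ merges all the $T_j$ and the gadget into one Eulerian component, yielding a valid \ac{RPP} solution of the same weight. For the backward direction, take an \ac{RPP} solution $J'$ and delete $r,r'$ and all their incident (zero-weight) edges; the surviving original edges form a multiset $J\subseteq E(G)$ with $w(J)=w(J')$. Since the only edges at $r$ go to depots, every surviving non-singleton component was attached to $r$ through some depot and therefore contains one, establishing the depot requirement of \ac{DRPP}, and $(V(G),R\dcup J)$ is again Eulerian.

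Plugging $(G',R',w)$ with its $k+1$ components into \cref{cor:apxk-RPP} produces, in time $2^{\ocal(k+1)}(n+\tfrac1\eps)^{\ocal(1)}$ — hence comfortably within $2^{\ocal(k\log k)}(n+\tfrac1\eps)^{\ocal(1)}$ — a solution $J$ that with constant probability satisfies $w(J)\le \OPT(G',R',w)+\eps\max\{w(e)\mid e\in J^*\dcup R'\}$. Because every fresh edge has weight $0$, the maximum is attained on $J^*\dcup R$, and a single edge there weighs at most $\max\{\OPT(G,D,R,w),w(R)\}\le \OPT(G,D,R,w)+w(R)$. Substituting $\OPT(G',R',w)=\OPT(G,D,R,w)$ and translating $J$ back through the cost-preserving deletion of the previous paragraph then gives $w(J)\le(1+\eps)\OPT(G,D,R,w)+\eps\, w(R)$, exactly as claimed.

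The one delicate point — and the step I expect to require the most care — is the parity bookkeeping in the backward direction. Deleting $r$ removes, at each depot $d$, precisely the root-edges incident to it, so Eulerianness survives only if every depot carries an \emph{even} number of these zero-weight edges. The key observation is that the total number of root-edges at $r$ is even (its degree is even and the gadget contributes $2$), so the depots of odd count come in pairs; adding one further zero-weight root-edge at each member of such a pair repairs all parities at no cost and without disconnecting anything. After this free normalization the deletion argument goes through verbatim, which is what upgrades the reduction from merely approximate to exactly weight-preserving.
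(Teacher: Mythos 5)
Your reduction has a genuine gap in the backward direction, and it is not a fixable bookkeeping issue: the reduction is simply not weight-preserving. First, the ``normalization'' you propose is vacuous. If $c_d$ denotes the number of copies of $\{r,d\}$ in $J'$ and $J = J' \cap E(G)$ is the surviving multiset, then the degree of $d$ in $(V(G), R\dcup J)$ equals its (even) degree in $(V(G'), R'\dcup J')$ minus $c_d$, so it is odd exactly when $c_d$ is odd. Adding a further copy of $\{r,d\}$ increases $c_d$ and the total degree of $d$ by the same amount and is then deleted anyway; the surviving edge set, and hence every surviving degree and its parity, is literally unchanged. So after your normalization the deletion argument does not go through: $(V(G),R\dcup J)$ can still fail to be Eulerian.

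Worse, no zero-cost repair can exist, because your claimed equality $\OPT(G',R',w)=\OPT(G,D,R,w)$ is false: the free star at $r$ lets an \ac{RPP} solution merge the tours of distinct depots at zero cost, whereas in $G$ such a merge costs real weight. Take $G$ to be the shortest-path metric of the unit-weight path on $d_1,a,b,d_2$ (in this order), with $D=\{d_1,d_2\}$ and $R=\{\{d_1,a\},\{d_2,b\}\}$. Any \ac{DRPP} solution must have an edge incident to each of the four odd-degree nodes $d_1,a,b,d_2$, hence weight at least $2$, and doubling $R$ achieves this, so $\OPT(G,D,R,w)=2$. But in $G'$ the multiset $J'=\{\{a,b\},\{r,d_1\},\{r,d_2\}\}$ is a valid \ac{RPP} solution of weight $1$: all degrees in $R'\dcup J'$ are even and everything lies in one component through $r$. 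Here $c_{d_1}=c_{d_2}=1$, the surviving part $\{\{a,b\}\}$ is not a \ac{DRPP} solution, and upgrading it to one costs at least $1$ extra, so the final bound $w(J)\leq(1+\eps)\OPT(G,D,R,w)+\eps w(R)$ cannot be recovered from this reduction. The paper avoids this trap by never letting distinct depots' tours be connected for free: it guesses the partition of the $k$ required components into classes served by the (at most $k$) depots---at most $\abs{D}^k \in 2^{\ocal(k\log k)}$ guesses, which is precisely where the $k\log k$ in the exponent comes from---and runs the algorithm of \cref{cor:apxk-RPP} separately on each class, so connectivity is only ever demanded within a single depot's class.
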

\begin{proof}
  Note first that each connected component of $(V(G), R)$ can be assumed to contain at most one depot, so $\abs{D} \leq k$.
  Some optimum solution $J^*$ induces a partition of the connected components of $(v(G),R)$ where each partition class corresponds to those components connected to some specific depot.
  There are at most $\abs{D}^k \in 2^{\mathcal O(k\log k)}$ possible partitions, so we can try each partition, solve the regular \ac{RPP} instance on each of the $\abs{D}$ classes of the partition using the algorithm from \cref{cor:apxk-RPP}, and return the best solution we found.
\end{proof}

\section{Intuition for the Algorithm}
The algorithm of Xu and Rodrigues~\cite{XuR2015} executes, at a very high level, the following steps:
\begin{enumerate}
	\item Compute a minimum-weight constrained spanning forest $F$ for $(G,D,w)$.
	\item Guess a set $X$ of at most $|D|-1$ edges such that they are in $F$ but not in some fixed optimal tour $T$.
	\item Discard the guessed edges $X$ from $F$.
    This leaves at most $2\abs{D}$ connected components in $(V,F\setminus X)$.
    If we have guessed correctly, every subtour of $T$ now contains an even number of odd-degree nodes.
    There must exist some edges $A$ from $T$ such that $(F\setminus X)\cup A$ is a \ac{CSF} for $(G,D)$ with $w((F\setminus X)\cup A) \leq w(T)$.
    The value $\abs{A}$ is at most $\abs{D}$, so we also guess $A$.
	\item Since $A$ contains only edges from $T$, every subtour of $T$ still contains an even number of odd-degree nodes with respect  to $(V,(F\setminus X)\cup A)$.
    If we compute an odd-join $J$ for $(F\setminus X)\cup A$, we have $w(J) \leq \frac{1}{2}\OPT$, so return $((F\setminus X)\cup A) \dcup J$.
\end{enumerate}
Since the algorithm needs to guess $2\abs{D}$ edges in total (in step 2), it can be implemented in time $n^{\mathcal O(d)}$.
We modify this guessing step by considering for discarding (in step 3) only very heavy edges, and by sidestepping the guessing of $A$; instead of computing first a connected structure and then a join we do this simultaneously, using the algorithm for \ac{RPP}.
Specifically:
\begin{itemize}
	\item In step 2, we only consider edges that are very expensive relative to the total weight of the forest $F$.
    If the targeted edge $e$ is not in this collection, we do not delete it but instead use it as part of the augmenting set $A$, doubling the edge.
    This also fixes parity, but requires us to relax $w((F\setminus X)\dcup A) \leq w(T)$ to $w((F\setminus X)\dcup A) \leq (1+\eps)w(T)$.
    The $\eps$ can be controlled by how expensive relative to $F$ we allow these non-deleted edges to be.
	\item In step 3, we do not actually guess $A$, we merely use its existence.
    We instead solve an instance of \ac{DRPP} with at most $2d$ connected components for which $A\dcup J$ is a solution.
    Using the algorithm from \cref{cor:apxk-DRPP}, we can compute a $(1+\eps)$-approximation for the \ac{DRPP} in time $f(d,\varepsilon)\cdot n^{\mathcal O(1)}$.
    We use the solution  $J'$ as a replacement for $A\dcup J$ knowing $w(J') \leq (1+\eps) w(A\dcup J)$.
    Combining inequalities for $J$ and $F$ gives: 
	\begin{equation*}
	  w((F\setminus X) \dcup J') \leq (1+\eps)w(((F\setminus X)\dcup A)\dcup J) \leq (1+\eps) \frac{3}{2}\OPT \enspace .
	\end{equation*}
\end{itemize}
	
An illustration of the augmentation scheme can be found in \cref{fig:Augmentation}.
\begin{figure}[htpb]
	\centering
	\includegraphics[page=1, scale = 0.7]{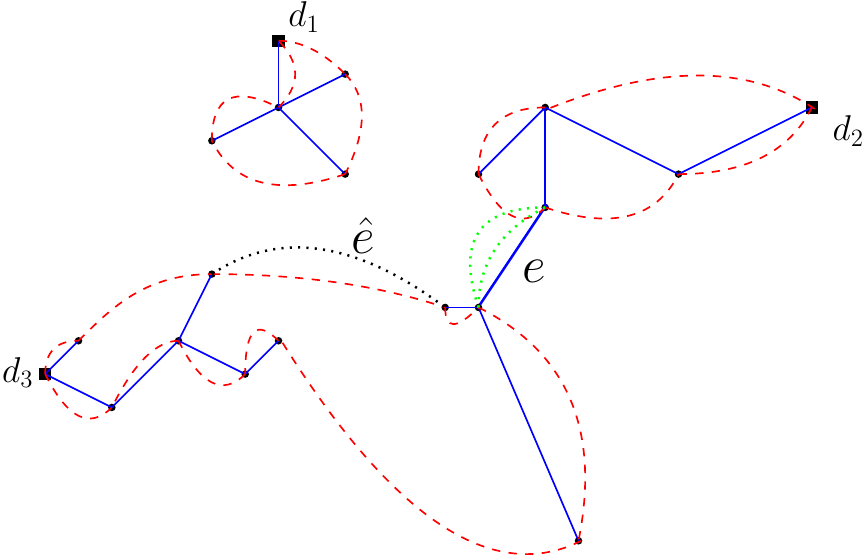}
	\caption{Illustration of the augmenting edges explored by our algorithm.
    Blue solid edges represent a \ac{CSF}, dashed red edges an optimal tour. 
		Note that the tours of $d_2$ and $d_3$ have an odd number of nodes with odd degree in the \ac{CSF}.
    Our algorithm considers two options to remedy this.
		We either add two copies (green, dotted) of the edge $e$ to the optimal tour, if $e$ is considered to be light enough. 
		This joins the tours of $d_2$ and $d_3$ to a single tour with an even number of odd-degree nodes.
		If $e$ is considered too heavy for this, we remove $e$ from the \ac{CSF} and replace it with the edge~$\hat{e}$ (black, dotted).
    As $\hat{e}$ comes from the optimal tour, this keeps the cost of the \ac{CSF} below $\OPT$ and it fixes the parities.
	}
\label{fig:Augmentation}
\end{figure}

\section{Towards Faster Parity Correction}
\label{sec:formalAlgorithmDescription}
In this section, we give a formal version of the algorithm described in the previous section, which we state as \cref{alg:1.5}. We prove this algorithm to be a $(3/2+\ocal(\eps))$-approximation for metric \ac{MDMTSP} in \cref{thm:algCorrectness}. 
We will restate and reprove some of the results of Xu and Rodrigues \cite{XuR2015} to ensure completeness of the presentation and to integrate properly our changes to their algorithm.

To this end, let us fix some notation throughout this section.
Let $(G,D,w)$ be a metric \ac{MDMTSP} instance with $D = \{d_1,\dots,d_d\}$, an optimal tour $T$, and the minimum-weight CSF $F$ for $(G,D,w)$ that was computed \cref{alg:MDChristofides}.
We denote by $T_i$ be the connected component of $T$ containing $d_i$, by $F_i$ the subtree of $F$ containing $d_i$, and $U_i$ the set of nodes in $T_i$ that have odd degree with respect to the edges in $F$.
We take $U = \bigcup_{i}U_i$.

By minimality of $F$, we already know that $w(F) \leq w(T) = \OPT$.
To extend $F$ to an \ac{MDMTSP} tour, we try to compute a minimum-weight matching between the nodes in $U$. 
It is a standard argument from the analysis of the Christofides-Serdyukov algorithm that if~$\abs{U_i}$ is even, $T_i$ contains two disjoint matchings for the nodes in $U_i$. 
So if every $U_i$ has even cardinality, then any minimum-weight matching has weight at most $\frac{1}{2} \OPT$. 
But this is not the case, since a tree $F_i$ might contain nodes from many different tours, so the odd-degree nodes are distributed arbitrarily. 
To record this ``misalignment'' between the trees and subtours we introduce the concept of an \emph{alignment graph}.
\begin{definition}[Alignment Graph]
  The alignment graph $H$ for $(G,F,T)$ is constructed as $V(H) = D$,  and
  \begin{equation*}
    E(H) = \{ \{d_i, d_j\} \mid \exists e \in F \text{ s.t. } \abs{e \cap V(T_i)} = \abs{e \cap V(T_j)} = 1\} \enspace .
  \end{equation*}
  We also define a weight function $w_H : E(H) \to \mathbb{R}_+$ as
  \begin{equation*}
    w_H( (d_i, d_j) ) := \min\{w(e) | e \in F, \;\abs{V(T_i) \cap e} = \abs{V(T_j) \cap e} = 1\} \enspace .
  \end{equation*}
\end{definition}
In the following, we assume that $H$ is connected, otherwise the analysis holds independently for each connected component.

Now we take $D_{\text{odd}}$ to be the collection of depots $d_i$ for which $\abs{U_i}$ is odd, and $A_H$ to be any $D_{\text{odd}}$-join in $H$.
The join $A_H$ can be used to augment the original tour $T$ to be connected.
To do this we transfer the join to the original graph to ensure that it contains a ``cheap'' matching. 
For every edge $e = \{d_i, d_j\} \in E(A_H)$, pick an edge in $\hat{e} \in E(F)$ with $w(\hat{e}) = w_H(e)$ and $\abs{\hat{e} \cap V(T_i)} = \abs{\hat{e} \cap V(T_j)} = 1$.
Denote by $A$ the collection of these $\hat{e}$. 
Observe that every node in  $T \dcup 2A$ has even degree, and every connected component of the graph contains an even number of nodes from $U$.
Hence, there exists a $U$-join $J$ in $G$ with $w(J) \leq \frac{1}{2}w(T\dcup 2A)$.

Now notice that, if the edges in $A$ have weight at most $\eps w(F)$, this inequality yields that \cref{alg:MDChristofides} already achieves a good approximation ratio, specifically
\begin{equation*}
  w(F) + w(M) \leq w(F) + w(J) \leq (1+\eps) w(F) + \frac{1}{2}w(T) \leq (\frac{3}{2}+\eps) w(T) \enspace .
\end{equation*}
Based on this observation, we are willing to augment $T$ with low-weight edges from $F$ to find a low-weight matching.
Therefore, we need to distinguish between \emph{heavy} and \emph{light} edges.
\begin{definition}
  Let $\eps > 0$.
  An edge $e \in F$ is $\eps$-\emph{light} if $w(e) \leq \frac{\eps}{d} \cdot w(F)$; else, it is $\eps$-\emph{heavy}.
\end{definition}

We now try to replace the $\eps$-heavy edges in $A$ with some other edges from $T$. 
\begin{lemma}[{compare \cite[Section 2]{XuR2015}}]
\label{lem:AugmentationReplacement}
  Let $X \subseteq A$.
  Then there exist a set $\hat{A} \subseteq E(T)$ of edges such that $(F\setminus X)\cup\hat{A}$ is a CSF for $(G,D)$ with $w((F\setminus X)\cup\hat{A}) \leq w(T)$.
\end{lemma}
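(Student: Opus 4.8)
The plan is to lift the entire statement into the graphic matroid of the rooted graph $G'$ that underlies the computation of minimum-weight \ac{CSF}s: recall that $G'$ is obtained from $G$ by adding a root $r$ joined to every depot, where I take each such root edge to have weight $0$, and write $R_0$ for the set of these edges. A minimum-weight \ac{CSF} then corresponds exactly to the minimum spanning tree $S_F := F \dcup R_0$ of $G'$ produced by Kruskal's algorithm, and conversely deleting $r$ from any spanning tree of $G'$ yields a \ac{CSF} in which each component contains exactly one depot. Since we may assume $m=d$, the optimal tour $T$ likewise has exactly one depot per component, so fixing one spanning tree inside each component $T_i$ produces a spanning forest $F_T \subseteq E(T)$ which is itself a \ac{CSF}; the corresponding $S_{F_T} := F_T \dcup R_0$ is again a spanning tree of $G'$, and $w(S_{F_T}) = w(F_T) \le w(T)$ because $F_T \subseteq T$.

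I would then invoke the symmetric (simultaneous) exchange property of matroid bases: for the two bases $S_F$ and $S_{F_T}$ of the graphic matroid of $G'$, and the subset $X \subseteq F \subseteq S_F$, there is a set $Y \subseteq S_{F_T}$ such that both $(S_F \setminus X) \dcup Y$ and $(S_{F_T}\setminus Y) \dcup X$ are spanning trees of $G'$. Summing their weights yields the identity
\[
  w\bigl((S_F \setminus X)\dcup Y\bigr) + w\bigl((S_{F_T}\setminus Y)\dcup X\bigr) = w(S_F) + w(S_{F_T}).
\]
Because $S_F$ is a \emph{minimum} spanning tree of $G'$, the second term on the left is at least $w(S_F)$, and hence $w\bigl((S_F\setminus X)\dcup Y\bigr) \le w(S_{F_T}) = w(F_T) \le w(T)$. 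This is precisely the bound we want, and crucially it is where the minimality of $F$ is used.

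It remains to translate back and to check that the exchanged edges avoid the root. As $X$ contains only edges of $G$, we have $R_0 \subseteq S_F \setminus X$; since $(S_F\setminus X)\dcup Y$ is a basis it repeats no edge, forcing $Y \cap R_0 = \emptyset$, i.e.\ $Y \subseteq F_T \subseteq E(T)$. Setting $\hat A := Y$, the set $((F\setminus X)\cup \hat A)\dcup R_0 = (S_F\setminus X)\dcup Y$ is a spanning tree of $G'$, so deleting $r$ shows that $(F\setminus X)\cup\hat A$ is a \ac{CSF} with $w\bigl((F\setminus X)\cup\hat A\bigr)\le w(T)$. The main obstacle is getting the inequality in the right direction: the exchange property by itself only certifies that a basis is produced, while the upper bound genuinely comes from the minimality of $F$ (equivalently, $S_F$ being an MST) via the summation identity, combined with the bookkeeping that keeps $\hat A$ free of root edges so that it lands inside $E(T)$ rather than $R_0$.
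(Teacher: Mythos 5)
Your proof is correct, and it proves the lemma by the same underlying mechanism as the paper --- an exchange argument between the minimum-weight CSF and a forest extracted from the optimal tour, with minimality of $F$ supplying the weight bound --- but the implementation is genuinely different. The paper's proof stays inside the CSF structure: it takes $F'$ to be $T$ with one edge removed per subtour (your $F_T$), picks for each $e \in X$ an edge $\hat e \in F'$ such that $F - e + \hat e$ is again a CSF with $w(e) \le w(\hat e)$ (single basis exchange plus minimality of $F$), and then asserts that ``this process can be iterated'' to swap out all of $X$, concluding $w((F\setminus X)\cup\hat A) \le w(F') \le w(T)$. You instead make the matroid explicit via the root gadget $G'$ and invoke the \emph{multiple} (simultaneous) symmetric exchange theorem of Greene/Brylawski/Woodall once, obtaining in one stroke a set $Y$ for which the swapped edge set is a basis, and you derive the weight bound from the summation identity together with $S_F$ being a minimum spanning tree. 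What your route buys is rigor exactly where the paper's sketch is thinnest: iterated single exchanges do not by themselves certify that deleting all of $X$ and inserting all of $\hat A$ simultaneously yields a basis, nor do the per-edge inequalities $w(e)\le w(\hat e)$ immediately aggregate to $w((F\setminus X)\cup\hat A)\le w(F')$; the simultaneous exchange theorem handles both at once, at the cost of invoking a heavier (though classical) result. One bookkeeping point deserves care: in the general statement of the multiple exchange theorem, the set $Y \subseteq S_{F_T}$ need not be disjoint from $S_F \setminus X$ (the two bases share, for instance, all root edges $R_0$), so your summation ``identity'' and the deduction $Y \cap R_0 = \emptyset$ implicitly assume the disjoint-union form of the theorem. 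This is harmless: replace $Y$ by $Y' := Y \setminus (S_F \setminus X)$; then $(S_F\setminus X)\cup Y' = (S_F\setminus X)\cup Y$ as an edge set, $Y'$ avoids $R_0$ automatically, and since weights are non-negative and $(S_{F_T}\setminus Y)\cup X$ has weight at least $w(S_F)$ by minimality, one still gets $w\bigl((S_F\setminus X)\cup Y'\bigr) \le w(S_F) - w(X) + w(Y) \le w(S_{F_T}) \le w(T)$, so the bound survives intact.
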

\begin{proof}
  Consider the forest $F'$ obtained from $T$ by removing exactly one edge from each subtour. 
  $F'$ contains only edges from $T$, so it is disjoint from $X$. 
  By a standard matroid exchange argument, for each $e \in X$ there is a $\hat{e} \in F'$ such that $F-e+\hat{e}$ is a CSF and $w(e) \leq w(\hat{e})$. This process can then be iterated to remove all of $X$.
  The collection of these $\hat{e}$ is $\hat{A}$, giving $w((F\setminus X)\cup\hat{A}) \leq w(F') \leq w(T)$.
\end{proof}

This process of replacing augmenting edges from $A$ with edges out of $T$ also fulfills the key goal of putting an even number of odd-degree nodes into every connected component of some augmented \ac{MDMTSP} solution. 
Consider the following lemma, which is in substance a version of a statement by Xu and Rodrigues~\cite[Theorem 2]{XuR2015}.
\begin{lemma}
\label{lem:parityThroughRemoval}
  Let $A,X,\hat{A}$ be as in \cref{lem:AugmentationReplacement}.
  Then every connected component of $T \cup (A \setminus X)$ contains an even number of nodes that have odd degree in $(F\setminus X)\cup\hat{A}$.
\end{lemma}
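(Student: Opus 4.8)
The plan is to collapse the whole statement into one parity count about a cut, via the localized handshake identity: for any multigraph $K$ on vertex set $V(G)$ and any $W \subseteq V(G)$, the number of nodes in $W$ having odd degree in $K$ is congruent modulo $2$ to the number of edges of $K$ with exactly one endpoint in $W$ (since $\sum_{v\in W}\deg_K(v)$ counts edges inside $W$ twice and boundary edges once). Writing $S := (F\setminus X)\cup\hat A$ for the spanner and fixing a connected component $C$ of $T\cup(A\setminus X)$ with $W := V(C)$, it therefore suffices to prove that the number of $S$-edges crossing the cut $(W, V(G)\setminus W)$ is even; I will denote this quantity by $|\delta_S(W)|$, and similarly $|\delta_K(W)|$ for other edge sets $K$.

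First I would pin down the shape of $W$. Each edge of $A\subseteq F$ joins two distinct components of $T$, because the lift $\hat e$ of an edge $\{d_i,d_j\}\in A_H$ has exactly one endpoint in $V(T_i)$ and one in $V(T_j)$. Hence the components of $T\cup(A\setminus X)$ are obtained from the components $T_1,\dots,T_d$ of $T$ by merging along $A\setminus X$, and under the bijection $A\leftrightarrow A_H$ this gives $W=\bigcup_{d_i\in S_c}V(T_i)$, where $S_c\subseteq D$ is a connected component of the graph $(D,\, A_H\setminus X_H)$ and $X_H\subseteq A_H$ is the $H$-edge set corresponding to $X$.

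Next I would split $\delta_S(W)$. Since $\hat A\subseteq E(T)$ and no edge of $T$ leaves a $T$-component, while $W$ is a union of such components, $\hat A$ contributes no crossing edges; thus $|\delta_S(W)| = |\delta_{F\setminus X}(W)|$, and as $X\subseteq F$ this equals $|\delta_F(W)| - |\delta_X(W)|$. I would evaluate both terms modulo $2$. For $F$, the handshake identity gives $|\delta_F(W)|\equiv |U\cap W| = \sum_{d_i\in S_c}|U_i| \equiv |S_c\cap D_{\text{odd}}| \pmod 2$, using $U\cap V(T_i)=U_i$ and the definition of $D_{\text{odd}}$. For $X$, an edge of $X$ crosses $W$ iff its two $T$-components lie on opposite sides of $S_c$, so $|\delta_X(W)| = |\delta_{X_H}(S_c)|$ inside $H$; and since $S_c$ is an entire component of $(D,\, A_H\setminus X_H)$, no edge of $A_H\setminus X_H$ crosses it, whence $|\delta_{X_H}(S_c)| = |\delta_{A_H}(S_c)|$. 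Applying the handshake identity to the graph $(D,\, A_H)$ together with the fact that $A_H$ is a $D_{\text{odd}}$-join yields $|\delta_{A_H}(S_c)| \equiv |D_{\text{odd}}\cap S_c| \pmod 2$. Adding the two contributions, $|\delta_S(W)|\equiv |S_c\cap D_{\text{odd}}| + |D_{\text{odd}}\cap S_c| \equiv 0 \pmod 2$, which is exactly what we need.

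The main obstacle I anticipate is bookkeeping rather than a deep idea: one must set up the correspondence between the cut $\delta_X(W)$ in $G$ and the cut $\delta_{A_H}(S_c)$ in $H$ cleanly, verifying that $\hat A$ never crosses the boundary and that $A\to A_H$ is a bijection carrying $A\setminus X$ onto $A_H\setminus X_H$. Once these identifications are secured, the argument is just two applications of the localized handshake lemma whose contributions cancel modulo $2$. A minor point worth stating explicitly is that the components $T_1,\dots,T_d$ partition $V(G)$, so that $\{U_i\}$ partitions $U$ and the displayed sums are well defined.
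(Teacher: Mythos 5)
Your proof is correct, and all the identifications it relies on (the bijection $A\leftrightarrow A_H$, the fact that the sets $V(T_i)$ partition $V(G)$, and that components of $T\cup(A\setminus X)$ are exactly the unions $\bigcup_{d_i\in S_c}V(T_i)$ over components $S_c$ of $(D,A_H\setminus X_H)$) do hold in the paper's setup. The route, however, is packaged differently from the paper's. The paper first disposes of $\hat A$ by a degree-flipping argument (each edge of $\hat A$ flips the parity of two nodes on the \emph{same} subtour, so per-subtour parity counts are unaffected), then tracks the \emph{set of depots} whose subtours contain oddly many odd-degree nodes: removing $X$ toggles this set exactly according to the endpoints of $X_H$, so the odd depots with respect to $F\setminus X$ are $D_{\text{odd}}\,\Delta\,\dot\bigcup_{e\in X_H}e$, and the key step is that $A_H\setminus X_H$ is precisely a join for this toggled demand set; evenness per component then follows from the (implicit) fact that each component of a join meets its demand set in an even number of vertices. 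You never form the toggled demand set and never need the residual-join claim: instead you reduce the entire statement to a single cut-parity computation, applying the localized handshake identity three times (to $(F\setminus X)\cup\hat A$, to $F$, and to $A_H$ in $H$) and letting the two contributions $\lvert S_c\cap D_{\text{odd}}\rvert$ cancel modulo $2$, with the component structure of $(D,A_H\setminus X_H)$ serving only to replace $\delta_{X_H}(S_c)$ by $\delta_{A_H}(S_c)$. What the paper's formulation buys is conceptual transparency, mirroring the algorithm's intuition that the leftover join still corrects the leftover misalignment; what yours buys is mechanical rigor and economy: one elementary identity, no multiset symmetric-difference bookkeeping, $\hat A$ handled by the trivial observation that it never crosses component boundaries, and an explicit cancellation where the paper's proof ends with the somewhat compressed assertion that $A_H\setminus X_H$ ``joins them correctly.''
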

\begin{proof}
  Notice first that the connected components of $T \cup (A\setminus X)$ are the union of some of the subtours of $T$. 
  Since the edges in $\hat{A}$ belong to some tour, adding them to $F\setminus X$ flips the parity of the degrees of two nodes on the same tour, so the total parity of odd-degree nodes on that tour does not change.
  We can therefore restrict ourselves to considering the odd-degree nodes with respect to $F\setminus X$.
  
  Recall that originally $A$ was constructed from a $D_{\text{odd}}$-join $A_H$ in the alignment graph~$H$.
  So~$X$ corresponds to some edge set $X_H \subseteq A_H$, and we know that $A_H \setminus X_H$ constitutes an $(D_{\text{odd}} \Delta \dot \bigcup_{e \in X_H} e )$-join, where $\Delta$ denotes the symmetric difference.
  For multisets, the symmetric difference of some sets contains an item if and only if it is contained an odd number of times in their disjoint union.
  At the same time, removing an edge $e \in X$ from $F$ with corresponding $\{d_i,d_j\} \in X_H$ changes the degree of one node in $V(T_i)$ and one node on~$V(T_j)$. 
  So, the depots whose tours contain an odd number of odd-degree nodes with respect to $F\setminus X$ are precisely $(D_{\text{odd}} \Delta \dot \bigcup_{e \in X_H} e )$, so $A_H \setminus X_H$ joins them correctly.
\end{proof}

We are now ready to prove that \cref{alg:1.5} returns a $(3/2 + \ocal(\eps))$-approximation, with constant probability.
\begin{algorithm}[t]
	\SetAlgoLined
	\KwIn{A metric \ac{MDMTSP} instance $(G, D, w)$ and a parameter $\eps>0$.}
	\KwOut{A tour $T$ such that $w(T) \leq (3/2+\eps)\OPT$ with constant probability.}
	Compute a minimum-weight CSF $F$ for $(G,D,w)$\;
  Let $T$ be the currently best \ac{MDMTSP} solution, initially $2F$\;
	Let $Y$ be the set of edges in $F$ which are $\eps$-heavy\;
	\ForEach{$X \subseteq Y$, $\abs{X} \leq \abs{D}$}{
    $F' := F \setminus X$\;
    Compute a solution $M$ for the \acs{DRPP} instance $(G,D,F',w)$ using \cref{cor:apxk-DRPP} \;
    \If{$w(M\dcup F') < w(T)$}{ set $T = M\dcup F'$\;}
  }
	\KwResult{$T$}
	\caption{Algorithm \textsc{Extended Multi-Depot Christofides-Serdyukov}\label[algorithm]{alg:1.5}}
\end{algorithm}    
\begin{theorem}
\label{thm:algCorrectness}
  The tour returned by \cref{alg:1.5} has weight at most $(3/2 + \ocal(\eps)) \OPT$, with constant probability.
\end{theorem}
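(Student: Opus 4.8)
The plan is to analyze the single iteration of \cref{alg:1.5} in which $X$ equals the set $X^*$ of $\eps$-heavy edges of the augmenting set $A$, and to show that this iteration alone produces a tour of weight $(3/2+\ocal(\eps))\OPT$ with constant probability. Since the algorithm returns a minimum-weight valid tour over all iterations (and its initial candidate $2F$ is already a valid tour, as doubling the \ac{CSF} gives even degrees and leaves each component depot-covered), this bound transfers to the output. First I would check that $X^*$ is actually enumerated. Taking $A_H$ to be a \emph{minimal} $D_{\text{odd}}$-join, it is acyclic, so $\abs{A}=\abs{A_H}\le d-1\le\abs{D}$; moreover every edge of $A$ lies in $F$, so its heavy edges lie in $Y$. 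Hence $X^*\subseteq Y$ with $\abs{X^*}\le\abs{D}$, and the loop does reach $F':=F\setminus X^*$. Removing the at most $d$ edges of $X^*$ from the forest $F$ (which has at most $d$ components) leaves at most $2d$ components, so \cref{cor:apxk-DRPP} applies with $k\le 2d$, runs within the claimed time, and succeeds with constant probability.

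Next I would exhibit a cheap feasible solution for the \ac{DRPP} instance $(G,D,F',w)$. Let $\hat A\subseteq E(T)$ be the replacement edges from \cref{lem:AugmentationReplacement}, so that $F'\cup\hat A$ is a \ac{CSF} with $w(F')+w(\hat A)=w(F'\cup\hat A)\le w(T)=\OPT$, and let $W$ be its set of odd-degree nodes. By \cref{lem:parityThroughRemoval}, every connected component of $T\cup(A\setminus X^*)$ contains an even number of vertices of $W$; these components coincide with those of the Eulerian multigraph $T\dcup 2(A\setminus X^*)$. Applying the standard Christofides short-cutting of an Eulerian circuit component-wise, each component yields a $W$-join of weight at most half its own weight, so summing produces a $W$-join $J$ with
\begin{equation*}
  w(J)\le\tfrac12\,w\bigl(T\dcup 2(A\setminus X^*)\bigr)=\tfrac12\OPT+w(A\setminus X^*)\le\tfrac12\OPT+\eps\,w(F)\le\bigl(\tfrac12+\eps\bigr)\OPT,
\end{equation*}
where the penultimate step uses that $A\setminus X^*$ consists of at most $d$ edges, each $\eps$-light of weight at most $\tfrac{\eps}{d}w(F)$. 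Then $M^*:=\hat A\dcup J$ makes $F'\dcup M^*$ Eulerian (its odd-degree set is $W\mathbin{\Delta}W=\emptyset$) while preserving depot-coverage of every non-singleton component, so $M^*$ is a feasible \ac{DRPP} solution and $\OPT(G,D,F',w)\le w(\hat A)+w(J)$.

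Finally I would chain the estimates. The solution $M$ computed by \cref{cor:apxk-DRPP} satisfies, with constant probability, $w(M)\le(1+\eps)\OPT(G,D,F',w)+\eps\,w(F')$. Combining with $w(F')+\OPT(G,D,F',w)\le w(F')+w(\hat A)+w(J)\le\OPT+(\tfrac12+\eps)\OPT=(\tfrac32+\eps)\OPT$ gives
\begin{equation*}
  w(F'\dcup M)=w(F')+w(M)\le(1+\eps)\bigl(w(F')+\OPT(G,D,F',w)\bigr)\le(1+\eps)\bigl(\tfrac32+\eps\bigr)\OPT=\bigl(\tfrac32+\ocal(\eps)\bigr)\OPT.
\end{equation*}
As the returned tour is no heavier than the candidate produced in the $X=X^*$ iteration, the theorem follows after rescaling $\eps$.

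The main obstacle I anticipate lies in the second paragraph: establishing that $M^*=\hat A\dcup J$ is genuinely \ac{DRPP}-feasible and light. The weight bound on $J$ is the technical heart, hinging on invoking \cref{lem:parityThroughRemoval} to guarantee an even number of $W$-terminals in each Eulerian component (so the short-cutting applies componentwise) together with the light-edge budget $w(A\setminus X^*)\le\eps\,w(F)$. A secondary point to verify carefully is \emph{coverage}: removing $X^*$ may isolate a non-depot node in $F'$, and \ac{DRPP} feasibility alone does not force such a node into a depot component, so the approximate solution $M$ could leave it uncovered. One must therefore argue—either by equipping the \ac{DRPP} instance with the zero-weight twin-edge gadget of \cref{obs:MDMTSPtoRPP}, or directly via the triangle inequality and the fact that $F'\cup\hat A$ spans all non-depot nodes—that every such node is reattached to its depot without exceeding the stated weight, so that $F'\dcup M$ is a bona fide \ac{MDMTSP} tour.
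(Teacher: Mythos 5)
Your proof is correct and takes essentially the same route as the paper's own proof: fix the iteration $X=X^*$, use \cref{lem:AugmentationReplacement,lem:parityThroughRemoval} to exhibit $\hat{A}\dcup J$ as a cheap feasible solution of the \ac{DRPP} instance $(G,D,F',w)$, invoke \cref{cor:apxk-DRPP}, and chain the inequalities; your $(1+\eps)\left(\tfrac32+\eps\right)$ accounting and the paper's $\tfrac32 w(T)+4\eps w(T)$ bound coincide up to rescaling $\eps$. Two of your elaborations go beyond the paper and are worth keeping. First, you verify that $X^*$ is actually enumerated, by taking a minimal (hence acyclic) $D_{\text{odd}}$-join so that $\abs{X^*}\leq\abs{A}\leq d-1\leq\abs{D}$ and $X^*\subseteq Y$; the paper uses this only implicitly. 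Second, you run the shortcutting argument on the genuinely Eulerian multigraph $T\dcup 2(A\setminus X^*)$, obtaining $w(J)\leq\tfrac12 w(T)+w(A\setminus X^*)$, whereas the paper states $w(J)\leq\tfrac12 w\bigl(T\cup(A\setminus X^*)\bigr)$ with single copies, which is slightly stronger than what its own preceding doubling argument supports; the discrepancy is immaterial since it is absorbed into the $\ocal(\eps)$ term either way. Finally, the coverage caveat you raise at the end is a real gap that the paper's proof silently skips: \ac{DRPP} feasibility tolerates singleton non-depot components, so if deleting $X$ isolates a node of $F$, the computed $M$ may leave that node unvisited, and $F'\dcup M$ then violates \ref{Prop:connectivity} while possibly being the cheapest candidate the algorithm sees. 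Your suggested repair—attaching the doubled zero-weight pendant edges of \cref{obs:MDMTSPtoRPP} at nodes isolated in $F'$, which forces coverage, keeps the number of required components in $\ocal(d)$, and leaves $\hat{A}\dcup J$ feasible—is exactly what is needed to make both your proof and the paper's fully rigorous.
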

\begin{proof}
  Set $T'$ to be the tour returned by the algorithm.
  Now let $A$ be the augmenting edge set for some optimal tour as before and $X$ the set of $\eps$-heavy edges in $A$. 
  We look at the iteration of the algorithm where that $X$ is considered for removal.
  From \cref{lem:AugmentationReplacement} we know that there exists some edge set $\hat{A}$ with $w(F\setminus X\cup \hat{A}) \leq \OPT$ and \cref{lem:parityThroughRemoval} implies that there is an edge set $J$ such that $F\setminus X\dcup \hat{A} \dcup J$ is Eulerian, contains a depot in each connected component, and $w(J) \leq \frac{1}{2} w(T \cup (A\setminus X))$.
  Therefore, $\hat{A} \dcup J$ is a solution to the \ac{DRPP} instance $(G,D,F',w)$. 
  Hence, the $M$ computed in the algorithm fulfills $w(M) \leq (1+\eps) w(\hat{A} \dcup J) + \eps w(F')$.
  Putting all these inequalities together yields
  \begin{align*}
    w(T') &= w(M) + w(F \setminus X) \\
          &\leq (w(F) - w(X) + w(\hat{A})) + (\frac{1}{2}(w(T) + w(A\setminus X)) + \eps(w(\hat{A}) + w(J) + w(F))\\
          &\leq w(T) + \frac{1}{2}(w(T) + d\cdot \frac{\eps}{d}w(F))+ \eps(w(\hat{A}) + w(J) + w(F))
          \leq \frac{3}{2} w(T) +4\eps w(T),
  \end{align*}
  where we use that $A \setminus X$ contains at most $d$ edges, and all of them are $\eps$-light.
\end{proof}
Notice that this algorithm will give a $(3/2+\eps)$-approximation when called with $\eps / 4$ as the parameter of approximation.
The additional run time cost will vanish in the $\ocal$-notation.
The probability of success for this algorithm is the same as that for the algorithm in \cref{thm:k-RPP}. 
Notice that while that algorithm is called many times, we only need it to succeed for one specific choice of $X$. 
If it fails in one of the other attempts, we do not care.
    
It remains to analyze the run time of this algorithm. 
We see that $Y$, the set of $\eps$-heavy edges, has size at most $\frac{d}{\eps}$, so there are only $(\frac{d}{\eps})^d$ possible values for $X$ to be tried.
Note also that each loop iteration requires the approximate solution of a \ac{DRPP} instance with $\ocal(d)$ components which can be done in time $2^{\ocal(d\log d)}(n + \frac{1}{\eps})^{\ocal(1)}$.
The total run time then is $(1/\eps)^{\ocal(d\log d )}\cdot n^{\ocal(1)}$, showing \cref{thm:mainResult}.

\section{A Deterministic \texorpdfstring{$\bm{3/2}$}{3/2}-Approximation for Graphic MDMTSP}
\label{sec:graphic-case}
In this section we provide a deterministic $3/2$-approximation for \ac{MDMTSP} when the metric is the shortest-path metric of an unweighted graph.
The run time of the algorithm is $2^d\cdot n^{\mathcal O(1)}$.

Let $(G,D)$ be an instance of graphic \ac{MDMTSP}, where $G$ this time is the unweighted graph inducing the shortest-path metric.
Note that we can assume $G$ to be connected. This allows us to construct \ac{TSP} tours that are not much more expensive than optimal solutions to \ac{MDMTSP}, which re-enables the original analysis of the Christofides-Serdyukov Algorithm.

For a given optimal \ac{MDMTSP} tour $T$, we can extend it to a \ac{TSP} tour by introducing at most $2(d-1)$ edges.
To do this, contract the subtours of $T$, find a spanning tree in the contracted graph, and double all the edges of that tree.
We then see that the solution $F\dcup M$ returned by \cref{alg:MDChristofides} fulfills
\begin{equation*}
  w(F\dcup M) \leq w(T) + \frac{1}{2}(w(T) + 2(d-1)) = \frac{3}{2}w(T) + d-1 \enspace .
\end{equation*}
Notice that the additive term $d-1$ is likely to be very small, since we know $w(T) \geq n-d$. 
A similar argument can also be made for metrics which are continuous in the sense that the space cannot be partitioned into two very distant parts.
\begin{observation}
\label{obs:contMetrics}
  Let $(G,D,w)$ be an \emph{integer-weighted} instance of \ac{MDMTSP} for which there exists a constant $L$ such that, for all $U\subseteq V(G)$, it holds
    $\min\{w(u,v) \mid u\in U, v\not \in U\} \leq L$.
  Then \cref{alg:MDChristofides} returns a solution $T$ for $(G,D,w)$ with $w(T) \leq \frac{3}{2}\OPT(G,D,w) + L(d-1)$.
\end{observation}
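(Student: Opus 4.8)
The plan is to reproduce, almost verbatim, the graphic-case argument given immediately above this statement; the only place where graphicness was used is to guarantee a cheap connecting edge across every cut, and here the stated cut condition supplies exactly such edges, of weight at most $L$ rather than $1$. First I would fix an optimal tour $T$ and let $T_1,\dots,T_c$ be its connected components. Since each component must contain a depot by property \ref{Prop:connectivity} and there are only $d$ depots, we have $c\leq d$. The goal is to augment $T$ into a single connected Eulerian spanning structure by adding only a few cheap edges, so that the classical Christofides--Serdyukov matching bound becomes available.

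To construct the augmentation I would contract each component $T_i$ to a super-node and grow a spanning tree over the super-nodes in Prim-like fashion: maintain a set $W\subseteq V(G)$ equal to the union of the components linked so far, and repeatedly add the cheapest edge leaving $W$. By hypothesis $\min\{w(u,v)\mid u\in W,\ v\notin W\}\leq L$, and because $W$ is a union of complete components this minimizing edge necessarily joins an already-connected component to a not-yet-connected one, so it is a legitimate tree edge. After $c-1$ steps all components are linked, producing an edge set $A_0$ with $|A_0|=c-1$ and $w(A_0)\leq L(c-1)\leq L(d-1)$. Setting $A:=2A_0$, the multigraph $T\dcup A$ is connected, spans $V(G)$, is still Eulerian (doubling preserves even degrees), and has $w(T\dcup A)\leq \OPT+2L(d-1)$.

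With a single connected Eulerian spanning structure in hand, the standard estimate applies directly. Let $F$ be the minimum-weight CSF computed by \cref{alg:MDChristofides} and let $U$ be its set of odd-degree nodes, which is of even size and lies entirely on the tour $T\dcup A$. Shortcutting an Eulerian traversal of $T\dcup A$ onto $U$ via the triangle inequality yields a cycle on $U$ of weight at most $w(T\dcup A)$, which splits into two perfect matchings, the lighter having weight at most $\tfrac12 w(T\dcup A)$. Hence the minimum-weight perfect matching $M$ found by the algorithm satisfies $w(M)\leq \tfrac12 w(T\dcup A)\leq \tfrac12\OPT+L(d-1)$. Combining this with $w(F)\leq \OPT$ (by minimality of the CSF, since an optimal tour contains a CSF as a subgraph) gives $w(F\dcup M)\leq \tfrac32\OPT+L(d-1)$, as claimed.

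The hard part will be the middle step: justifying that the greedy growth on the contracted components actually draws its connecting edges from cuts to which the hypothesis applies. The subtlety is that the cut condition is quantified over \emph{arbitrary} subsets $U\subseteq V(G)$, so I must observe that restricting attention to subsets $W$ that are unions of whole components is harmless, and that the guaranteed light cut-edge cannot have both endpoints inside one already-connected component precisely because $W$ is closed under taking complete components. Once this is settled, the remaining ingredients---the bound $c\leq d$, the doubling, and the shortcutting matching estimate---are entirely routine, and integrality of the weights plays no role in the argument itself.
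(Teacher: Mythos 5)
Your proposal is correct and follows essentially the same route as the paper: the observation is exactly the generalization (with connecting-edge cost $L$ in place of $1$) of the argument the paper gives just before it for the graphic case, namely contracting the subtours of an optimal tour, joining them by a spanning tree of cheap cut edges, doubling those edges, and then applying the standard shortcutting/matching analysis together with $w(F)\leq\OPT$. Your Prim-style justification that the cut hypothesis supplies each tree edge of weight at most $L$ is precisely the detail the paper leaves implicit, and your closing remark that integrality is irrelevant to this bound (it matters only for the asymptotic claim that follows) is also accurate.
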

Since we know $w(T)$ to be in $\Omega(n-d)$ also in this case, \cref{alg:MDChristofides} gives an \emph{asymptotic} $3/2$-approximation for any constant $d$ and $L$.
We can even get rid of the additive term in the graphic case (i.e. $L=1$) with some additional run time. 
\begin{observation}
\label{obs:graphic1.5}
  There is a $3/2$-approximation algorithm for graphic \ac{MDMTSP} with run time $2^d\cdot n^{\ocal(1)}$.
\end{observation}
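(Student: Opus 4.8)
The plan is to split instances by the size of $\OPT$ relative to $n$, using two facts special to the graphic case: since $G$ is connected a minimum \ac{CSF} $F$ has weight exactly $n-d$ (cut any spanning tree into $d$ depot-rooted trees with $n-d$ unit-weight edges), and every tour satisfies $\OPT \geq n-d$ (each of the $n-d$ non-depot nodes has degree at least $2$, while every edge has weight at least $1$). First I would sharpen the baseline of \cref{obs:contMetrics}. Extending an optimal tour $T$ to a single closed walk costs at most $2(d-1)$ extra unit-weight edges, one doubled connector per merged subtour; these connectors have unit weight because $G$ is connected and the subtours cover all nodes. Then all odd-degree nodes of $F$ lie on one Eulerian structure, so \cref{alg:MDChristofides} returns $F\dcup M$ with
\begin{equation*}
  w(F\dcup M) \;\leq\; w(F) + \tfrac12\bigl(\OPT + 2(d-1)\bigr) \;=\; (n-1) + \tfrac12\OPT ,
\end{equation*}
which is at most $\tfrac32\OPT$ exactly when $\OPT \geq n-1$. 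Hence for every instance with $\OPT \geq n-1$ the deterministic polynomial-time \cref{alg:MDChristofides} is already a $\tfrac32$-approximation, and no enumeration is needed.

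It then remains to handle the regime $\OPT \leq n-2$. Here the optimum lies within $d-2$ of the \ac{CSF} weight $n-d$, so its \emph{excess} $\OPT-(n-d)$ is at most $d-2$; in particular, since every nontrivial subtour is a cycle through at least one depot, $T$ has at most $d-2$ nontrivial subtours and all remaining depots are isolated. The idea is to spend the budget $2^{d}$ to enumerate a subset $S\subseteq D$ that guesses which depots are isolated in $T$ (and, more coarsely, the component structure of $T$). For the correct guess I would recompute a minimum \ac{CSF} forced to turn the guessed-isolated depots into singleton trees; note that forcing $k$ depots to be singletons keeps the \ac{CSF} weight at $(n-k)-(d-k)=n-d\leq\OPT$, and the isolated depots then have even (namely zero) degree and drop out of the odd-degree set $U$. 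Applying \cref{lem:AugmentationReplacement} and \cref{lem:parityThroughRemoval} to this realigned forest, every subtour of $T$ contains an even number of $F$-odd-degree nodes, so a minimum $U$-join has weight at most $\tfrac12\OPT$ and the resulting tour has weight at most $w(F)+\tfrac12\OPT \leq (n-d)+\tfrac12\OPT \leq \tfrac32\OPT$. Returning the cheapest tour over all $2^{d}$ guesses is deterministic, and each step (minimum \ac{CSF}, minimum matching, enumeration) is deterministic, for total time $2^{d}\cdot n^{\ocal(1)}$.

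The hard part will be the bad regime, specifically proving that \emph{some} $2^{d}$-enumerable guess yields a minimum-weight \ac{CSF} whose odd-degree nodes split evenly across the optimal subtours, so that the parity correction becomes free rather than costing up to $d-1$ connectors. Isolating depots cleanly handles the case of a single nontrivial subtour (as in the lower-bound-style examples), but when several nontrivial subtours coexist, full alignment would seem to require guessing a partition of the depots, which is $2^{\Theta(d\log d)}$ rather than $2^{d}$. The leverage I expect to exploit is precisely that $\OPT\leq n-2$ caps the excess, hence both the number of nontrivial subtours and the total amount of misalignment, by $d-2$; the crux of the argument is to convert this bound on misalignment into a subset-sized ($2^{d}$) family of realignment guesses that provably contains a parity-aligned minimum \ac{CSF}. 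This is exactly the alignment obstruction that forces the $n^{\Theta(d)}$ guessing of Xu and Rodrigues and the $\eps$-slack of \cref{thm:mainResult}, and the unit-weight structure is what I expect to make an exact, deterministic resolution possible here.
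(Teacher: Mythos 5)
Your first regime is correct, and it is a genuinely different observation from anything in the paper: from $w(F)=n-d$, $\OPT\geq n-d$, and the connector argument of \cref{obs:contMetrics} with $L=1$, plain \cref{alg:MDChristofides} outputs a tour of weight at most $(n-1)+\frac12\OPT$, which is at most $\frac32\OPT$ once $\OPT\geq n-1$. So enumeration would only ever be needed when $\OPT\leq n-2$; the paper makes no such case distinction.

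The second regime, however, contains a genuine gap, and you essentially say so yourself (\enquote{the hard part}, \enquote{the crux of the argument}). Your claim that, for the correct guess $S$ of isolated depots, the singleton-forced minimum \ac{CSF} has its odd-degree nodes split evenly across the subtours of $T$ is not implied by \cref{lem:AugmentationReplacement} or \cref{lem:parityThroughRemoval} (those lemmas exchange a guessed set $X\subseteq A$ against tour edges and give evenness only per component of the merged structure $T\cup(A\setminus X)$), and it is false in general: take two unit $4$-cycles $(d_1,x_1,x_2,x_3)$ and $(d_2,y_1,y_2,y_3)$ joined by the unit edge $\{x_2,y_2\}$, plus two depots $s_1,s_2$ pendant on $d_1$. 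Here $\OPT=8=n-2$ via the two $4$-cycles, so this instance lies in your bad regime with correct guess $S=\{s_1,s_2\}$; yet the forest $\{d_1x_1,\,x_1x_2,\,x_2y_2,\,d_1x_3\}\cup\{d_2y_1,\,d_2y_3\}$ is a minimum \ac{CSF} respecting that guess, and it puts one odd-degree node on the first subtour and three on the second, so the algorithm may compute a misaligned forest and your bound $w(F)+\frac12\OPT$ is unsupported. (Your auxiliary claim that forcing $k$ singletons keeps the minimum \ac{CSF} at weight $n-d$ also silently needs every component of $G-S$ to contain a depot of $D\setminus S$.) Misalignment is thus not an artifact of isolated depots, and capping the number of nontrivial subtours does not remove it --- this is exactly the obstruction the paper's proof is designed to \emph{sidestep}, not solve. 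After guessing $D'$ (the depots with nonempty subtours), the paper never asks the \ac{CSF} to be aligned; it uses the guess twice in the accounting instead: deleting one edge per nontrivial subtour of $T$ shows $w(F)\leq\abs{E(T)}-\abs{D'}$, and reconnecting those subtours with $\abs{D'}-1$ doubled unit edges yields one connected Eulerian structure containing all odd-degree nodes of $F$, so the matching costs at most $\frac12(\abs{E(T)}+2(\abs{D'}-1))$. The saving of $\abs{D'}$ cancels the reconnection cost of $\abs{D'}-1$, giving $\frac32\abs{E(T)}-1$ on every instance, with no parity-alignment requirement and no case distinction on $\OPT$.
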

\begin{proof}
  Let $T$ be some fixed optimal tour.
  We start by guessing the set $D'\subseteq D$ of depots whose subtours in $T$ contain at least one edge, generating on overhead of $2^d$. 
  Then we know that $T$ contains a \ac{CSF} $F'$ for $(G,D')$ with weight $\abs{E(T)} - \abs{D'}$. 
  As before, we connect together all subtours of the depots in $D'$ with $\abs{D'}-1$ edges, and double these edges.
  Then the tour $F\dcup M$ returned by \cref{alg:MDChristofides} fulfills 
  \begin{equation*}
    \abs{E(F\dcup M)} \leq \abs{E(T)} - \abs{D'} + \frac{1}{2}(\abs{E(T)} + 2(\abs{D'}-1)) = \frac{3}{2} \abs{E(T)} - 1,
  \end{equation*}
  and that proves the claim.
\end{proof}
For the special case where we require each depot to have a non-empty tour, we do not even have to guess the correct subset of depots in \cref{obs:graphic1.5}, yielding a $3/2$-approximation in truly polynomial time.

\section{Discussion}
\label{sec:discussion}
We have shown that metric \ac{MDMTSP} admits a randomized $(3/2+\varepsilon)$-approximation algorithm in time $(1/\varepsilon)^{\ocal(d\log d)}\cdot n^{\ocal(1)}$, filling in the gap between the best-known polynomial approximation factor, $2$, and the $3/2$-approximation of Xu and Rodrigues in time $n^{\Theta(d)}$.
However, there remain a number of natural openings for improving on our result:
\begin{itemize}
  \item Can our algorithm be derandomized?
    Since we rely on the algorithm of Gutin et al. \cite{GutinWY2017} to solve \ac{RPP} instances, this would require a derandomization of their result.
    However, their algorithm relies on the Schwartz-Zippel Lemma \cite{Schwartz1980, Zippel1979} for which no deterministic alternatives have been found in the last 40 years.
  \item Can the approximation factor be improved from $3/2 + \eps$ to $3/2$?
    We loose some approximation quality both when determining which edges to delete from the \ac{CSF}, and when solving \ac{RPP}. 
    Improving the first point would require a further refinement of the tree-rearrangement technique introduced by Xu and Rodrigues~\cite{XuR2015}.
    For the second point, the \ac{RPP} algorithm of Gutin et al. would need to be sped up to run in strongly polynomial time.
    Again, their algorithm relies on algebraic techniques for which derandomization appears difficult, so a major technical innovation for $k$-component \ac{RPP} is maybe necessary.
  \item Does there exist some polynomial-time $\alpha$-approximation algorithm for \ac{MDMTSP} with $\alpha < 2$?
    We know from Traub et al. \cite{TraubVZ2020} that any $\alpha$-approximation algorithm for single-salesperson \ac{TSP} implies a $(\alpha+\eps)$-approximation for \ac{MDMTSP} for any constant number of depots, i.e. in time $n^{\Theta(d)}$.
    For instances with many depots however, the problem remains intractable.
    It is of particular interest that two major technical tools for the classical \ac{TSP}, Christofides' Algorithm and the Subtour-Elimination LP, fail to achieve better-than-$2$-approximations in the multi-depot regime (see \cref{fig:CounterexampleChristofides} and \cref{fig:CounterexampleLP}).
    It appears that to make progress on a polynomial-time algorithm some novel structural insights would be required.
\end{itemize}
\paragraph{Acknowledgements}The third author thanks L{\'a}szl{\'o} V{\'e}gh for inspiring discussions on multi-depot TSP and feedback on an earlier version.

\bibliography{Bibliography}
\end{document}